\newtheorem{theorem}{Theorem}
\newtheorem{lemma}{Lemma}
\newcommand{\qed}{\hfill$\rule{2mm}{3mm}$}
\newenvironment{proof}{\par{\noindent \bf Proof:}}{\qed \par}
\newcommand{\ignore}[1]{}
\newcommand{\Int}{{\mathbb{Z}}}
\newcommand{\Utility}{{\mathcal{U}}}
\newcommand{\Inverse}{{\mathcal{V}}}
\newcommand{\Water}{{\mathcal{W}}}
\newcommand{\Patch}{{P}}
\newcommand{\Potential}{{\Phi}}
\newcommand{\NE}{{\mathcal{E}}}
\newcommand{\Neighbors}{{\mathcal{N}}}
\newcommand{\waterlevel}{\omega}
\begin{document}

\title{The Networked Common Goods Game}
\author{Jinsong Tan\\
Department of Computer and Information Science \\
University of Pennsylvania, Philadelphia, PA 19104}
\date{}
\maketitle

\thispagestyle{empty}

\begin{abstract}
We introduce a  new class of games called the \emph{networked
common goods game} (NCGG), which generalizes the well-known
\emph{common goods game} \cite{Kagel95}. We focus on a fairly
general subclass of the game where each agent's utility functions
are the same across all goods the agent is entitled to and satisfy
certain natural properties (diminishing return and smoothness). We
give a comprehensive set of technical results listed as follows.
\begin{itemize}

\item We show the optimization problem faced by a single agent can
be solved efficiently in this subclass. The discrete version of
the problem is however NP-hard but admits a \emph{fully polynomial
time approximation scheme} (FPTAS).

\item We show uniqueness results of pure strategy Nash equilibrium
of NCGG, and that the equilibrium is fully characterized by the
structure of the network and independent of the choices and
combinations of agent utility functions.

\item We show NCGG is a \emph{potential game}, and give an
implementation of best/better response Nash dynamics that lead to
fast convergence to an $\epsilon$-approximate pure strategy Nash
equilibrium.

\item Lastly, we show the \emph{price of anarchy} of NCGG can be
as large as $\Omega(n^{1-\epsilon})$ (for any $\epsilon>0$), which
means selfish behavior in NCGG can lead to extremely inefficient
social outcomes.
\end{itemize}

\end{abstract}

\section{Introduction}

A collection of members belong to various communities. Each member
belongs to one or more communities to which she can make
contributions, either monetary or in terms of service but subject
to a budget, and in turn benefits from contributions made by other
members of the communities. The extent to which a member benefits
from a community is a function of the collective contributions
made by the members of this community.

A collection of collaborators are collaborating on various
projects. Each collaborator is collaborating on one or more
projects and each project has one or more collaborators. Each
collaborator comes with certain endowment of resources, in terms
of skills, time and energy, that she can allocate across the
projects on which she is collaborating. The extent to which a
project is successful is a function of the resources collectively
allocated to it by its collaborators, and each of its collaborator
in turn derives a utility from the successfulness of the project.

A collection of friends interact with each other, and friendships
are reinforced through mutual interactions or weakened due to the
lack of them. The more time and effort mutually devoted by two
friends in their friendship, the stronger the friendship is; the
stronger the friendship is, the more each benefit from it.
However, each friend is constrained by her time and energy and has
to decide how much to devote to each of her friends.

Suppose the community members, the collaborators and the friends
(which we collectively call \emph{agents}) are all self-interested
and interested in allocating their limited resources in a way that
maximizes their own total utility derived from the communities,
projects, and mutual friendships (which we collectively call
\emph{goods}) that they have access to. Interesting computational
and economics questions abound: Can the agents efficiently find
optimal ways to allocate their resources? Viewed as a game played
by the agents over a bipartite network, how does the network
structure affect the game? In particular, does there exist a pure
strategy Nash equilibrium? Is it unique and will myopic and
selfish behaviors of the agents lead to a pure strategy Nash
equilibrium? And how costly are these myopic and selfish
behaviors?

In this paper we address these questions by first proposing a
model that naturally captures these strategic interactions, and
then giving a comprehensive set of results to the scenario where
there is only one resource to be allocated by the agents, and the
utility an agent derives from a good to which she is entitled is a
concave and smooth function of the total resource allocated to
that good. We start by giving our model that we call the
\emph{networked common goods game} (NCGG). \vspace{0.5em}

\noindent {\bf The Model.} The \emph{networked common goods game}
is played on a bipartite graph $G = (P, A, E)$, where $P = \{p_1,
p_2, ..., p_n\}$ is a set of \emph{goods} and $A = \{a_1, a_2,
..., a_m\}$ is a set of \emph{agents}. If there is an edge $(p_i,
a_j) \in E$, then agent $a_j$ is entitled to good $p_i$. There is
a single kind of divisible resource of which each agent is endowed
with one unit (we note this is not a loss of generality as our
results generalize easily to the case where different agents start
with different amounts of resource). Moreover, we can assume
Nature has endowed each common good $p_i$ with $\alpha_i$ amount
of resource that we call the \emph{ground level}; this can be
viewed as modelling $p_i$ as having access to some external
sources of contributions.

Let $\Neighbors(v)$ denote the set of neighbors of a node $v\in P
\cup A$, $x_{ij} \in [0,1]$ the amount of resource agent $a_j$
contribute to good $p_i$ and $\waterlevel_i = \alpha_i + \sum_{a_k
\in \Neighbors(p_i)} x_{ik}$ the total amount of resource
allocated to good $p_i$. Each agent $a_j$ derives certain utility
$\Utility_j(\waterlevel_i)$ from each $p_i$ of which she is a
member. We always assume $\Utility_j(0) = 0$ and for the most part
of the paper, we consider the case where $\Utility_j(\cdot)$ is
increasing, concave and differentiable. Being self-interested,
agent $a_j$ is interested in allocating her resources across the
goods to which she is entitled in a way that maximizes her total
utility $\sum_{p_i \in \Neighbors(a_j)}
\Utility_j(\waterlevel_i)$.

\vspace{0.5em}

\noindent {\bf Our Results.} We first consider the optimization
problem faced by a single agent: Given the resources already
allocated to the goods to which agent $a_j$ is entitled, find a
way to allocate resource so that $a_j$'s total utility is
maximized. We call this the \emph{common goods problem} (CGP) and
consider both continuous and discrete versions, where the agent's
resource is either infinitely divisible or atomic.

\begin{itemize}
\item We show that for the continuous version, if
$\Utility_j(\cdot)$ is assumed to be increasing, concave and
differentiable, then CGP has an analytical solution. On the other
hand, the discrete version of CGP is NP-hard but admits an FPTAS.
\end{itemize}

We then turn to investigate the existence and uniqueness of pure
strategy Nash equilibrium of NCGG\footnote{Since we concern
ourself with only pure strategy Nash equilibrium in this paper, we
use it interchangeably with Nash equilibrium.}. We consider two
concepts of uniqueness of equilibrium, among which \emph{strong
uniqueness} is the standard concept of equilibrium uniqueness
whereas \emph{weak uniqueness} is defined as follows: For any two
equilibria $\NE$ and $\NE'$ of the game and for any good $p_i \in
P$, the total amount of resource allocated to $p_i$ is the same
under both $\NE$ and $\NE'$. We have the following results.

\begin{itemize}

\item We show for any NCGG instance, a Nash equilibrium always
exists. And we show that this Nash equilibrium is weakly unique
not only in a particular NCGG instance, but across all NCGG
instances played on the same network as long as the utility
function of each agent is increasing, concave and differentiable.
And if in addition the underlying graph is a tree, the equilibrium
is strongly unique. Our results do not assume that different
agents have the same utility function; this demonstrates that Nash
equilibrium in NCGG is completely characterized by network
structure.
\end{itemize}

We also consider the convergence of Nash dynamics of the game, and
its \emph{price of anarchy}: The worst-case ratio between the
social welfare of an optimal allocation of resources and that of a
Nash equilibrium \cite{Papa01}.

\begin{itemize}

\item We show that NCGG is a \emph{potential game}, a concept
introduced in \cite{Monderer}, therefore any (better/best
response) Nash dynamics always converge to the (unique) pure
strategy Nash equilibrium. We then propose a particular
implementation of Nash dynamics that leads to fast convergence to
a state that is an additive $\epsilon$-approximation of the pure
strategy Nash equilibrium of NCGG. The convergence takes $O(Kmn)$
time, where $K = \max_{j}{ \Utility_j^{-1}(\epsilon/n)}$, which
for most reasonable choices of $\Utility_j$ is a polynomial of $n$
and $m$. (For example, for $\Utility_j(x) = x^p$ where $p\in
(0,1)$ is a constant, it is sufficient to set $K =
(n/\epsilon)^{1/p}$, which is a polynomial in $n$.)

\item We show the price of anarchy of the game is
$\Omega(n^{1-\epsilon})$ (for any $\epsilon>0$), which means
selfish behavior in this game can lead to extremely inefficient
social outcomes, for a reason that echoes the phenomenon of
\emph{tragedy of the commons} \cite{Hardin68}.

\end{itemize}

We note that NCGG introduced in this paper has the particularly
nice property that very little is assumed about agents' utility
functions. Unlike most economic models considered in the
literature where not only a particular form of utility function is
assumed about a particular agent, but very often the same utility
function is imposed across all agents, so that the model remains
mathematically tractable, our model do not assume more than the
following: 1) $\Utility_j(0) = 0$; 2) $\Utility_j(\cdot)$ has
diminishing return (increasing and convex); 3) $\Utility_j(\cdot)$
is smooth (differentiable). In particular, we do not need to
assume different agents share a common utility function for our
results to go through.

\vspace{0.5em}

\noindent {\bf Related Work.} The networked common goods game we
consider is a natural generalization of the well-known common
goods game \cite{Kagel95}. Bramoull\'{e} and Kranton considered a
different generalization of the common goods game to networks
\cite{Bramoulle07}. In their formulation a (general,
non-bipartite) network is given where each node represents an
agent $a_i$, who can exert certain amount of effort $e_i \in
[0,+\infty)$ towards certain common good and such effort incurs a
cost of $ce_i$ on the part of the agent, for some constant $c$.
$a_i$'s effort directly benefits another agent $a_j$ iff they are
directed connected in the network, and the utility of $a_i$ is
defined as $\Utility_i(e_i + \sum_{a_j \in \Neighbors(i)}e_j) -
ce_i$. Bramoull\'{e} and Kranton then analyze this model to yield
the following interesting insights: First, in every network there
is an equilibrium where some individuals contribute whereas others
free ride. Second, specialization can be socially beneficial. And
lastly, a new link in the network can reduce social welfare as it
can provide opportunities to free ride and thus reduce individual
incentives to contribute. We note both the model and the research
perspectives are very different from those considered in this
paper.

A more closely related model is that studied by Fol'gardt
\cite{Folgardt93,Folgardt95}. The author considered a resource
allocation game played on a bipartite graph that is similar to our
setting. In Fol'gardt's model, each agent has certain amount of
discrete resources, each of unit volume, that she can allocate
across the `sites' that she has access to. Each site generates
certain utility for the agent, depending on the resources jointly
allocated to it by all its adjacent agents. In Fol'gardt's
formulation, each agent is interested in maximizing the minimum
utility obtained from a single site she has access to. The
analysis of Fol'gardt's resource allocation game is limited to
very specific and small graphs \cite{Folgardt93,Folgardt95}.

A variety of other models proposed and studied in the literature
bear similarities to the networked common goods game considered
here. These include Fisher's model of economy \cite{Fisher1891},
the bipartite exchange economy \cite{Kakade04,EvenDar07}, the
fixed budget resource allocation game \cite{Feldman05,Zhang05},
the Pari-Mutuel betting as a method of aggregating subjective
probabilities \cite{Eisenberg}, and the market share game
\cite{Goemans04}. However these model all differ significantly in
the ways allocations yield utility.

\section{The Common Goods Problem}

Recall that CGP is the optimization problem faced by a single
agent: An agent has access to $n$ goods, each good $p_i$ has
already been allocated $\alpha_i \geq 0$ resources. The agent has
certain amount of resource to allocate across the $n$ goods.
Denote by $x_i$ ($i = 1,...,n$) the amount of resource the agent
allocates to goods $i$, she receives a total utility of
$\sum_{i=1}^n \Utility(\alpha_i + x_i)$. In this section, we
consider two versions of this optimization problem, where the
resource is either infinitely divisible or discrete.

\subsection{Infinitely Divisible Resource}

Without loss of generality, assume the agent has access to one
unit of resource. In the infinitely divisible case, CGP is a
convex optimization problem captured by the following convex
program.
\begin{equation}\label{ConvexProgramGeneral}
\begin{array}{rllr}
& \mbox{ maximize } &  \sum_{i=1}^n{\Utility{(\alpha_i + x_i)}} \\
& \mbox{ subject to} & \sum_{i=1}^n{x_i} = 1 \\
& & x_i \geq 0 \qquad \qquad (i = 1,2,...,n)
\end{array}
\end{equation} where the constraint $\sum_{i=1}^n{x_i} = 1
$ comes from the observation that $\Utility(\cdot)$ is an
increasing function so an optimal solution must have allocated the
entire unit of resource.

As it turns out, as long as $\Utility(\cdot)$ is increasing,
concave and differentiable, the above convex program admits
exactly the same unique solution regardless of the particular
choice of $\Utility(\cdot)$. And we note this solution coincides
with what is known in the literature as the {\it water-filling}
algorithm \cite{Boyd04}. This is summarized in the following
theorem. The proof relies on the above program being convex to
apply the well-known \emph{Karush-Kuhn-Tucker} (KKT) optimality
condition \cite{Boyd04}, and is relegated to the appendix.

\begin{theorem}\label{Thm:UniqueSolution} For any utility function $\Utility$ that is
concave and differentiable, the convex program admits a unique
analytical solution. Moreover, the solution is unique across all
choices of $\Utility(\cdot)$ as long as it is increasing, concave
and differentiable.
\end{theorem}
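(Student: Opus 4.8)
The plan is to exploit the convexity of program (\ref{ConvexProgramGeneral}) so that the Karush--Kuhn--Tucker (KKT) conditions are both necessary and sufficient for optimality, and then to show these conditions force a \emph{water-filling} solution whose form is dictated entirely by the ground levels $\alpha_i$ and the budget, with $\Utility$ dropping out completely. First I would record that, since each $\Utility$ is concave and composition with the affine map $x_i \mapsto \alpha_i + x_i$ preserves concavity, the objective is concave; the constraints are affine, so the feasible region is a nonempty, compact simplex and the linear constraint qualification holds. Hence a feasible point is optimal if and only if it satisfies KKT. Introducing a multiplier $\lambda$ for the equality constraint $\sum_i x_i = 1$ and multipliers $\mu_i \geq 0$ for $x_i \geq 0$, the stationarity condition reads $\Utility'(\alpha_i + x_i) = \lambda - \mu_i$ for every $i$, together with complementary slackness $\mu_i x_i = 0$.

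Second, I would translate these conditions into the water-filling picture. Complementary slackness splits the goods into two types. For an \emph{active} good with $x_i > 0$ we have $\mu_i = 0$ and hence $\Utility'(\waterlevel_i) = \lambda$; for an \emph{inactive} good with $x_i = 0$ we have $\Utility'(\alpha_i) = \lambda - \mu_i \leq \lambda$. Because $\Utility$ is concave, $\Utility'$ is non-increasing, and under strict diminishing returns it is strictly decreasing, hence invertible; writing $\tau := (\Utility')^{-1}(\lambda)$, the active condition becomes $\waterlevel_i = \tau$, i.e.\ $x_i = \tau - \alpha_i > 0$ with $\alpha_i < \tau$, while the inactive condition $\Utility'(\alpha_i) \leq \Utility'(\tau)$ becomes $\alpha_i \geq \tau$. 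Thus every candidate optimum has the form $x_i = \max(\tau - \alpha_i,\, 0)$ for a single common threshold $\tau$.

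Finally I would pin down $\tau$ and read off the uniqueness statement. Substituting the candidate into the budget constraint gives the scalar equation $\sum_{i=1}^n \max(\tau - \alpha_i,\, 0) = 1$. Its left-hand side is continuous and, for $\tau > \min_i \alpha_i$, strictly increasing from $0$ to $+\infty$, so the equation has exactly one root $\tau^*$; this determines $x^*_i = \max(\tau^* - \alpha_i,\, 0)$ uniquely, and verifying KKT at $x^*$ with $\lambda = \Utility'(\tau^*)$ confirms optimality by sufficiency. The crux --- and the step I expect to be the main obstacle to state cleanly --- is the observation that neither $\tau^*$ nor $x^*$ references $\Utility$ at all: the budget equation involves only the $\alpha_i$. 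Consequently the optimizer is identical for every increasing, concave, differentiable $\Utility$, which is exactly the ``uniqueness across all choices of $\Utility$'' claim. I would also remark that strictness of concavity is used only to invert $\Utility'$; for merely concave $\Utility$ one argues directly that any two optima must share the same water levels $\waterlevel_i$, recovering the same $x^*$.
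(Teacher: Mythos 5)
Your proposal is correct and follows essentially the same route as the paper's own proof: both apply the KKT conditions to the convex program, derive the water-filling form $x_i^* = \max(\tau - \alpha_i,\,0)$ (your threshold $\tau$ is the paper's $\Inverse(\nu^*)$), and pin down the threshold via the budget equation $\sum_{i=1}^n \max(\tau - \alpha_i,\,0) = 1$, which involves only the $\alpha_i$ and hence is independent of $\Utility$. The one caveat you flag --- that inverting $\Utility'$ implicitly requires strictly diminishing returns --- applies equally to the paper's proof, which likewise defines $\Inverse$ as the inverse of $\Utility'$.
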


Therefore the unique optimal way to allocate resources across the
goods is independent of the agent's utility function as long as it
is differentiable and has diminishing return, which is a very
reasonable assumption. We note this is a particularly nice
property of the model as it frees us from imposing any particular
form of utility function, which can often be arbitrary, and the
risk of observing artifacts thus introduced. In NCGG considered
later, this property frees us from making the assumption that each
agent has the same utility function, which is standard of most
economic models whose absence would often render the underlying
model intractable.

\subsection{Discrete Resource}

In the discrete case, the agent has access to a set of atomic
resources, each of integral volume. We show in the next two
theorems that although the discrete CGP is NP-hard even in a
rather special case, the general problem always admits an FPTAS.

\begin{theorem} \label{Thm:NPhard} The discrete common goods problem is
NP-hard even when each atomic resource is of unit volume and
$\Utility(\cdot)$ is increasing.
\end{theorem}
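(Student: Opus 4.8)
The plan is to prove NP-hardness by reduction from \textsc{Subset Sum}: given positive integers $w_1,\dots,w_k$ and a target $T\le\sum_i w_i$, decide whether some subset sums to exactly $T$. This is the natural source problem because \textsc{Subset Sum} is only \emph{weakly} NP-complete, which is exactly consistent with the FPTAS claimed in the next theorem. Conceptually, the hardness must come entirely from dropping concavity: Theorem~\ref{Thm:UniqueSolution} shows that with a concave $\Utility$ the continuous problem has a clean water-filling solution, so the reduction should exploit our freedom to choose a non-concave (yet still increasing) $\Utility$. The high-level idea is to create one good per number $w_i$, give each good an ``all-or-nothing'' activation whose reward equals $w_i$ and whose cost is $w_i$ units of resource, and set the agent's budget to exactly $T$. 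Then spending the whole (unit-volume, exactly $T$) budget as profitably as possible forces the activated goods to form a subset summing precisely to $T$.

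For the construction, I would place the goods in disjoint ``bands'' of the real line so that a single function $\Utility$ can treat them differently through their ground levels. Let $R=T+1$, let good $i$ have ground level $\alpha_i=(i-1)R$, and give the agent $B=T$ unit resources, so every $x_i\le T<R$ and hence the water level $\alpha_i+x_i$ stays inside $[(i-1)R,\,iR)$. On band $i$ I would define, for $z\in[(i-1)R,\,iR)$,
\[
\Utility(z)=\epsilon z+\sum_{j<i}w_j+w_i\cdot\mathbf{1}\{\,z-(i-1)R\ge w_i\,\},
\]
i.e.\ a jump of height $w_i$ at threshold $w_i$, stacked by the running offset $\sum_{j<i}w_j$ so that $\Utility$ is globally increasing across band boundaries, plus a negligible uniform slope $\epsilon z$ to make it strictly increasing and satisfy $\Utility(0)=0$. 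One checks the pieces agree at each boundary and the whole function is monotone. Summing over goods and using $\sum_i x_i=T$ (fixed), the slope and offset terms collapse to a constant $C$, giving
\[
\sum_{i=1}^{n}\Utility(\alpha_i+x_i)=C+\sum_{i:\,x_i\ge w_i}w_i .
\]

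It then remains to argue correctness in both directions. If a subset $S$ sums to $T$, setting $x_i=w_i$ for $i\in S$ and $x_i=0$ otherwise uses exactly $T$ resource and yields reward $\sum_{i\in S}w_i=T$. Conversely, any feasible activated set $S=\{i:x_i\ge w_i\}$ satisfies $\sum_{i\in S}w_i\le\sum_{i\in S}x_i\le T$, so the reward never exceeds $T$ and equals $T$ only when $S$ sums to exactly $T$; leftover budget (when $\sum_{i\in S}w_i<T$) can always be dumped onto an already-activated good without changing any reward, precisely because $\Utility$ is flat after each threshold. Hence the optimum equals $C+T$ iff the instance is a yes-instance, and the decision query ``is the optimal utility at least $C+T$?'' solves \textsc{Subset Sum}. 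Since $R$, the thresholds, and the breakpoints of $\Utility$ are all polynomial in the input numbers, the reduction is polynomial.

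I expect the main obstacle to be the monotonicity of $\Utility$: we cannot penalize overshooting a threshold, so the obvious gadget that rewards hitting a value \emph{exactly} degenerates into ``cover as many goods as possible,'' which is solvable greedily and is not hard. The fix, and the step needing the most care, is the combination of the flat-after-threshold reward with the \emph{exact}-budget constraint $\sum_i x_i=T$, which converts ``spend everything profitably'' into an exact subset-sum requirement; together with the observation that a uniform slope contributes only a constant (because $\sum_i x_i$ is fixed), this lets a single globally increasing $\Utility$ realize all per-good gadgets at once while keeping the hardness intact.
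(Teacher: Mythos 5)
Your proof is correct, and it uses the same central device as the paper's proof: placing the goods in disjoint ``bands'' of the real line via ground levels $(i-1)R$, encoding a per-good gadget inside each band through a single globally increasing piecewise utility, and adding a tiny linear slope to enforce strict monotonicity (the paper's slope is $\waterlevel/((n^2-n+2)B^2)$). The difference is the source problem and hence the gadget: the paper reduces from the \emph{unbounded knapsack problem}, so its in-band reward is the staircase $\lfloor x_i/w_i\rfloor v_i$ (counting how many copies of item $i$ fit in the resource devoted to good $i$), and the correspondence between optima is a value-maximization statement that the paper leaves to the reader to verify; you reduce from \textsc{Subset Sum}, so your in-band reward is a single step $w_i\cdot\mathbf{1}\{x_i\ge w_i\}$, and correctness follows from the clean inequality $\sum_{i\in S}w_i\le\sum_i x_i\le T$ with equality analysis, where the exact-budget constraint does the work that value maximization does in the paper. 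Your version is arguably more elementary and easier to verify end-to-end (the step-function gadget makes both directions of the equivalence two-line arguments), while the paper's UKP reduction packs more structure into the utility function; both establish only weak NP-hardness, which is exactly what the FPTAS of the following theorem requires. One minor point worth stating explicitly in your write-up: items with $w_i>T$ either need to be discarded in preprocessing or you should note that their jump point falls outside the reachable range of band $i$, so monotonicity and the correspondence are unaffected; as your construction stands this is a one-line remark, not a gap.
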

\begin{proof} We prove the hardness result by giving a reduction
from the NP-hard \emph{unbounded knapsack problem}
\cite{Martello90}. \vspace{0.3em}

\noindent {\sc Unbounded Knapsack Problem (UKP)}

\noindent {\sc Instance:} A finite set $U = \{1,2,...,n\}$ of
items, each item $i$ has value $v_i \in \Int^+$, weight $w_i \in
\Int^+$ and unbounded supply, a positive integer $B \in \Int^+$.

\noindent {\sc Question:} Find a multi-subset $U'$ of $U$ such
that $\sum_{i\in U'}v_i$ is maximized and $\sum_{i\in U'}w_i \leq
B$.

\vspace{0.3em}

Since supply is unlimited we can assume without loss of generality
that no two items are of the same weight and no item is strictly
dominated by any other item, i.e. $w_i > w_j$ implies $v_i
> v_j$. Now create $n$ goods, $p_1,...,p_n$, where $p_i$
corresponds to item $i$ and has a ground level $(i-1)B$. Let the
agent have access to a total of $B$ atomic resource, each of unit
volume. Define the utility function $\Utility(\cdot)$ as follows:
$\Utility(\waterlevel) = \sum_{i=1}^{\mu(\waterlevel)-1}{ \lfloor
\frac{B}{w_i} \rfloor v_i } + \lfloor \frac{ \nu(\waterlevel) }{
w_{\mu(\waterlevel)} } \rfloor v_{\mu(\waterlevel)} +
\frac{\waterlevel}{(n^2 - n +2) B^2} $ where $\mu(\waterlevel) =
\lceil \waterlevel / B \rceil$ and $\nu(\waterlevel) = \waterlevel
\bmod B$.

Clearly, $\Utility(\cdot)$ is a strictly increasing function, and
thus we only concern ourselves with those CGP solutions that
allocate all $B$ atomic units of resources. One can then verify
that there is a solution of total value $K$ to the UKP instance
iff there is a solution of total utility $\sum_{j=1}^n
\sum_{i=1}^{j-1}{ \left\lfloor \frac{B}{w_i} \right\rfloor v_i } +
\frac{1}{2B} + K$ to the corresponding CGP instance. Therefore the
discrete common goods problem is NP-hard.
\end{proof}

\begin{theorem}\label{Thm:FPTAS} The discrete common goods problem always admits an FPTAS.
\end{theorem}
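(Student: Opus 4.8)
The plan is to build a pseudo-polynomial \emph{dynamic program} and then convert it into an FPTAS by scaling and rounding the utility values, in the spirit of the classical FPTAS for the knapsack problem. The key design decision is which quantity to index the DP table by. Because the total budget $B$ may be given in binary (indeed, the reduction of Theorem~\ref{Thm:NPhard} pushes $B$ into the exponential regime), a table indexed by the amount of resource consumed would be far too large. Instead I would index by \emph{accumulated utility}: process the goods $p_1,\dots,p_n$ one at a time and let $G_k(v)$ denote the minimum total amount of resource needed to accumulate utility at least $v$ using only $p_1,\dots,p_k$. The optimum is then read off as the largest $v$ with $G_n(v)\le B$, and the recurrence is $G_k(v)=\min_q\{G_{k-1}(v-q)+w_k(q)\}$, where $w_k(q)$ is the least resource good $p_k$ needs to contribute utility $q$.

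A naive evaluation of $w_k(q)$, or of the transition, would range over all integer allocations $x_k\in\{0,1,\dots,B\}$ to good $p_k$, which is again too expensive. Here I would exploit that $\Utility(\cdot)$ is increasing, so $\Utility(\alpha_k + x_k)$ is monotone in $x_k$: for each target contribution, the minimum resource realizing it is recovered by a single inversion, i.e. a binary search over $x_k\in[0,B]$ using $O(\log B)$ evaluations of $\Utility$. Thus each good contributes only a small menu of (utility, minimum-resource) pairs, and the DP collapses to a multiple-choice knapsack over these $n$ menus.

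To make the table polynomially sized I would scale. Fix $\delta=\epsilon L/n$, where $L=\max_i \Utility(\alpha_i+B)$ is the utility of dumping the entire budget on the single best good; since $L\le \mathrm{OPT}\le nL$, rounding every good's utility down to a multiple of $\delta$ leaves at most $n/\epsilon$ distinct contributions per good and at most $n^2/\epsilon$ attainable rounded total-utility levels, so both the number of DP states and the per-good menu sizes are polynomial in $n$ and $1/\epsilon$. Rounding down costs at most $\delta$ per good, hence at most $n\delta=\epsilon L\le \epsilon\cdot\mathrm{OPT}$ in total, so the allocation returned has utility at least $(1-\epsilon)\,\mathrm{OPT}$. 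The overall running time is polynomial in $n$, $1/\epsilon$, and $\log B$ (plus the cost of evaluating $\Utility$), which is exactly an FPTAS.

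The main obstacle is precisely the combination that defeats both obvious dynamic programs: the budget $B$ is large, which forbids a resource-indexed table, while the utilities are arbitrary reals, which forbids a naive utility-indexed table. The crux of the argument is therefore to index by \emph{scaled} utility and to show that the continuum of resource choices at each good collapses, via the monotonicity of $\Utility$, to a polynomial-size menu; once that is in place, establishing the $(1-\epsilon)$ guarantee reduces to bounding the accumulated rounding error against the easily-computed lower bound $L$ on the optimum. I would emphasize that, consistent with Theorem~\ref{Thm:NPhard}, no concavity of $\Utility$ is used here: the construction needs only that $\Utility$ is increasing and efficiently evaluable.
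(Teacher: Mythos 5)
Your proof is correct, but it takes a genuinely different route from the paper's. The paper proves this theorem by a black-box reduction to the multiple-choice knapsack problem (MCKP): for each good $p_i$ it creates a class of $B$ items, the $j$th having weight $j$ and value $\Utility(\alpha_i+j)$, and then invokes a known FPTAS for MCKP. You instead give a self-contained algorithm: scale utilities by $\delta=\epsilon L/n$, compress each good's continuum of choices into a small menu of minimum-resource inversions computed by binary search, and run a utility-indexed dynamic program with the standard rounding-error analysis. The difference matters. The paper's reduction must explicitly enumerate all $B$ items in every class, so the resulting algorithm is polynomial only when $B$ is polynomially bounded (unary-encoded); your algorithm runs in time polynomial in $n$, $1/\epsilon$ and $\log B$, so it remains an FPTAS when $B$ is given in binary --- which is exactly the regime the hardness reduction of Theorem~\ref{Thm:NPhard} lives in (there $B$ is the UKP capacity, exponential in the instance size). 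In that sense your argument closes an encoding-level gap that the paper's proof leaves open; viewed through the paper's lens, your menus implicitly extract a polynomial-size sub-instance of its MCKP instance, which is possible because the $B$ items of class $i$ all lie on the single monotone curve $j\mapsto\Utility(\alpha_i+j)$. Both proofs use only that $\Utility$ is increasing and efficiently evaluable, with no concavity, as you note. Two small points to tighten: justify $\mathrm{OPT}\ge L$ explicitly (it needs $\Utility\ge 0$, which follows from $\Utility(0)=0$ and monotonicity), and state that, like the paper's own reduction, you are treating the atomic resources as unit-volume, so that a good's allocation is an arbitrary integer in $\{0,\dots,B\}$; heterogeneous integral volumes would add subset-sum constraints that neither proof handles.
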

\begin{proof} The discrete common goods problem can
be reduced to the \emph{multiple-choice knapsack problem}

\vspace{0.3em}

\noindent {\sc Multiple-Choice Knapsack Problem (MCKP)}

\noindent {\sc Instance:} A finite set $U = \{1,2,...,k\}$ of
items, each item $i$ has value $v_i$, weight $w_i$ and belongs to
one of $n$ classes, a capacity $B > 0$.

\noindent {\sc Question:} Find a subset $U'$ of $U$ such that
$\sum_{i\in U'}v_i$ is maximized, $\sum_{i\in U'}w_i \leq B$, and
at most one item is chosen from each of the $n$ classes.

\vspace{0.3em}

The reduction goes as follows. For a general CGP instance, where
there are $B$ atomic unit-volume resources, and goods $\{p_1, ...,
p_n\}$ such that good $p_i$ has ground level $\alpha_i$, create a
MCKP instance such that there are $n$ classes $c_1, ..., c_n$.
Class $c_i$ corresponds to good $p_i$ and has $B$ items of weight
$j$ and value $\Utility(\alpha_i + j)$, for $j = 1,...,B$. The
knapsack is of total capacity $B$.

It is not hard to see that there is a solution of total utility
$K$ to the CGP instance if and only if there is a solution of
total value $K$ to the MCKP instance. Therefore, any approximation
algorithm for the latter translates into one for the former with
the same approximation guarantee. Since an FPTAS is known for MCKP
\cite{Bansal04,Kellerer}, CGP also admits an FPTAS.
\end{proof}

\section{Pure Strategy Nash Equilibrium}

We consider in this section the existence and uniqueness of Nash
equilibrium in NCGG.

\subsection{The Existence of Nash Equilibrium}\label{sec:existence}

First we show a Nash equilibrium always exists in NCGG when the
utility functions satisfy certain niceness properties.

\begin{theorem}\label{Thm:Existence} For any NCGG instance, a pure
strategy Nash equilibrium always exists as long as $\Utility_j$ is
increasing, concave and differentiable for any agent $a_j$.
\end{theorem}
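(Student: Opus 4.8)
The plan is to recognize the NCGG as a \emph{concave game} and to obtain existence from a standard fixed-point theorem, since the defining hypotheses---compact convex strategy spaces, payoffs continuous in the joint profile, and payoffs concave in a player's own strategy---are met verbatim here. Concretely, I would take agent $a_j$'s strategy to be the allocation vector $(x_{ij})_{p_i \in \Neighbors(a_j)}$ constrained to the simplex $\Delta_j = \{x \ge 0 : \sum_{p_i \in \Neighbors(a_j)} x_{ij} = 1\}$ (imposing equality is without loss because each $\Utility_j$ is increasing, so no agent ever benefits from leaving resource unspent). Each $\Delta_j$ is nonempty, compact and convex, hence so is the joint strategy space $\prod_j \Delta_j$.

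First I would verify that every payoff $u_j(x) = \sum_{p_i \in \Neighbors(a_j)} \Utility_j(\waterlevel_i)$ is continuous in the full profile $x$: each total $\waterlevel_i = \alpha_i + \sum_{a_k \in \Neighbors(p_i)} x_{ik}$ is affine in $x$, and $\Utility_j$ is differentiable and hence continuous, so $u_j$ is a finite sum of continuous functions. Next, fixing the opponents' allocations $x_{-j}$, each $\waterlevel_i$ with $p_i \in \Neighbors(a_j)$ reduces to a constant $c_i$ plus the single variable $x_{ij}$, so $u_j(\cdot, x_{-j}) = \sum_{p_i \in \Neighbors(a_j)} \Utility_j(c_i + x_{ij})$ is a sum of concave functions of $a_j$'s own variables and is therefore concave on $\Delta_j$. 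This is the one place the concavity assumption on $\Utility_j$ enters, and it is exactly what the water-level structure makes available.

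With these properties in hand I would run the best-response/Kakutani argument. Define the best-response correspondence $\mathrm{BR}_j(x_{-j}) = \arg\max_{x \in \Delta_j} u_j(x, x_{-j})$. By the above, $\mathrm{BR}_j$ is nonempty (a continuous objective on a compact set attains its maximum), convex-valued (the maximizers of a concave function over a convex set form a convex set), and upper hemicontinuous (Berge's maximum theorem, using continuity of $u_j$ and compactness of $\Delta_j$). The product map $\mathrm{BR} = \prod_j \mathrm{BR}_j$ then sends the compact convex set $\prod_j \Delta_j$ into itself and meets all hypotheses of Kakutani's fixed-point theorem; any fixed point is, by definition, a profile at which no agent can unilaterally improve, i.e. a pure strategy Nash equilibrium.

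I expect the only delicate point to be establishing upper hemicontinuity of $\mathrm{BR}_j$, which is precisely where Berge's theorem (and hence the continuity of $u_j$) is needed, while convex-valuedness is exactly what the concavity of $\Utility_j$ buys and is what would fail for non-concave utilities. I would also note that, because the later section establishes that NCGG is a potential game, an alternative route would exhibit a potential $\Potential$ and take a maximizer of $\Potential$ over $\prod_j \Delta_j$ as the equilibrium; I nonetheless favor the concave-game argument at this point, since it is self-contained and, crucially, uses no assumption that the agents share a common utility function, matching the full generality claimed in the statement.
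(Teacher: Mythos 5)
Your proof is correct, and it takes a recognizably different route from the paper's. The paper also runs a fixed-point argument on best responses, but it exploits its Theorem~\ref{Thm:UniqueSolution} (the water-filling characterization) to conclude that each agent's best response is \emph{unique}, packages all agents' best responses into a single-valued map $f:[0,1]^D \to [0,1]^D$, asserts that $f$ is continuous, and applies Brouwer's fixed-point theorem. You instead give the textbook concave-game (Debreu--Glicksberg--Fan style) argument: set-valued best-response correspondences, convex-valuedness from concavity of $u_j$ in own strategy, upper hemicontinuity from Berge's maximum theorem, and Kakutani's fixed-point theorem. The trade-off is clear: the paper's route needs only Brouwer and no correspondence machinery, but it leans on the uniqueness theorem and quietly sweeps the continuity of $f$ under the rug (``it is clear that $f$ is continuous''), which in fact requires exactly the Berge-type argument you make explicit (a single-valued upper hemicontinuous correspondence is continuous). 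Your route is heavier on machinery (Kakutani plus Berge) but is self-contained, does not depend on Theorem~\ref{Thm:UniqueSolution} at all, and would survive even if best responses failed to be unique---for instance if the $\Utility_j$ were concave but not strictly so in a way that breaks the water-filling uniqueness. Both proofs correctly avoid assuming a common utility function across agents.
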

\begin{proof} Let $\deg(a_i)$ be the degree of agent $a_i$ and
$D = \sum_{a_i \in A} { deg(a_i) }$. Let $s \in [0,1]^D$ be the
state vector that corresponds to how the $m$ agents have allocated
their resources, where the $( \sum_{k=1}^{i-1} deg(a_k) )$th to
the $(\sum_{k=1}^{i} deg(a_k) )$th dimension of $s$ correspond to
$a_i$'s allocation of her resource on the $deg(a_i)$ goods she is
connected to (assume an arbitrary but fixed order of the goods
$a_i$ is connected to). Define function $f : [0,1]^D \rightarrow
[0,1]^D $ such that $f(s)$ maps to the \emph{best response} state
$s'$, where $\left( s'_{ \sum_{k=1}^{i-1} deg(a_k) }, ..., s'_{
\sum_{k=1}^{i} deg(a_k)}  \right)$ corresponds to $a_i$'s best
response. Note $s'$ is unique because each agent $a_i$'s best
response is unique by Theorem \ref{Thm:UniqueSolution}, therefore
$f(s)$ is well-defined.

It is clear that $[0,1]^D$ is compact (i.e. closed and bounded)
and convex, and $f$ is continuous. Therefore, applying Brouwer's
fixed point theorem shows that $f$ has a fixed point, which
implies NCGG has a Nash equilibrium.
\end{proof}

We note on the other hand, it is easy to see that if $\Utility_j$
is allowed to be convex, then a pure strategy Nash equilibrium may
not exist in NCGG.

\subsection{The Uniqueness of Nash Equilibrium}\label{sec:uniqueness}

We next establish uniqueness results of Nash equilibrium of NCGG
in the next two theorems. Apparently, NCGG played on a general
graph does not have a unique Nash equilibrium in the standard
sense: Consider for example the $2\times 2$ complete bipartite
graph where $P = \{p_1, p_2\}$ and $A = \{ a_1, a_2 \}$, for any
$0 \leq \delta \leq 1$, $a_1$ (resp. $a_2$) allocating $\delta$
(resp. $1-\delta$) resource on $p_1$ and $1-\delta$ (resp.
$\delta$) resource on $p_2$ constitutes a pure strategy Nash
equilibrium and therefore there are uncountably infinite many of
them. However, all these equilibria can still be considered as
equivalent to each other in the sense that they all allocate
exactly the same amount of resource to each good. And the reader
is encouraged to verify as an exercise that any Nash equilibrium
in the above NCGG instance belongs to this equivalence class.
Therefore, the Nash equilibrium is still unique, albeit in a
weaker sense.

To capture this, we thus consider two concepts of uniqueness of
equilibrium: We say an NCGG instance has a \emph{weakly unique}
equilibrium if all its equilibria allocate exactly the same amount
of resource on each good $p_i$. And if an NCGG instance has an
equilibrium that is unique in the standard sense, we call it
\emph{strongly unique}. We note the concept of weak uniqueness is
a useful one as it implies the uniqueness of each agent's utility
in equilibrium, which is really what we ultimately care about.

We show two uniqueness results in this section. The first one
establishes that NCGG has a strongly unique Nash equilibrium if
the underlying graph is a tree. The second one indicates that it
is not a coincidence that the example shown above has a weakly
unique equilibrium --- in fact, we show \emph{any} NCGG instance
has a weakly unique Nash equilibrium. Furthermore, our results
indicate that the equilibrium is a function of the structure of
the underlying graph only, and independent of the particular forms
and combinations of agents' utility functions, as long as these
functions are increasing, concave and differentiable.

\begin{theorem}\label{Thm:WeaklyUnique} The Nash equilibrium
of NCGG is \emph{weakly unique} across all networked common goods
games played on a given bipartite graph $G = (P,A,E)$, as long as
$\Utility_j$ is increasing, convex and differentiable for any
agent $a_j$.
\end{theorem}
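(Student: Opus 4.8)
Before sketching the argument I should flag that the hypothesis must be read as \emph{increasing, concave, and differentiable}, which is the standing assumption used for Theorems~\ref{Thm:UniqueSolution} and~\ref{Thm:Existence} (and is what ``diminishing return'' means). With genuinely convex $\Utility_j$ the statement is false: on the $2\times 2$ complete bipartite graph with $\Utility(\waterlevel)=\waterlevel^2$, both agents piling their unit of resource on $p_1$ is an equilibrium, and so is both piling it on $p_2$, giving the distinct water-level vectors $(2,0)$ and $(0,2)$. So I read the theorem under concavity and proceed accordingly.

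The key leverage is Theorem~\ref{Thm:UniqueSolution}: each agent's best response is the water-filling allocation, and this allocation is the \emph{same} function of the neighboring goods' current base levels regardless of which admissible $\Utility_j$ the agent has. Consequently the best-response map $f$ of Theorem~\ref{Thm:Existence} is itself independent of the chosen utility profile, so the set of Nash equilibria is literally identical across all admissible games on $G$. This collapses the ``across all games on $G$'' part of the claim into a single statement: in the common, utility-independent water-filling structure on $G$, every fixed point assigns the same total level $\waterlevel_i$ to each good. I would then record the equilibrium conditions explicitly: water-filling endows each agent $a_j$ with a threshold $\tau_j$ such that (i) $x_{ij}>0$ forces $\waterlevel_i=\tau_j$, (ii) $x_{ij}=0$ forces $\waterlevel_i\ge\tau_j$ for every $p_i\in\Neighbors(a_j)$, and (iii) $\sum_{p_i\in\Neighbors(a_j)}x_{ij}=1$ (the budget is fully spent since $\Utility$ is increasing). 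Thus at equilibrium each agent contributes only to her lowest neighboring goods, all raised to one common level.

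The heart of the proof is to recognize conditions (i)--(iii) as the optimality conditions of a single strictly convex program. I would introduce
\begin{equation*}
\min \ \tfrac12\sum_{p_i\in P}\waterlevel_i^2 \quad\text{s.t.}\quad \waterlevel_i=\alpha_i+\!\!\sum_{a_j\in\Neighbors(p_i)}\!\!x_{ij},\quad \sum_{p_i\in\Neighbors(a_j)}\!\!x_{ij}=1,\quad x_{ij}\ge 0 .
\end{equation*}
Writing the Lagrangian with multipliers $\lambda_j$ for the budget constraints and $\mu_{ij}\ge 0$ for the nonnegativity constraints, stationarity gives $\waterlevel_i=\lambda_j+\mu_{ij}$, and complementary slackness $\mu_{ij}x_{ij}=0$ reproduces exactly (i)--(iii) with $\lambda_j=\tau_j$. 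Since the program is convex with linear constraints, its KKT conditions are necessary and sufficient for a global minimizer, so \emph{every} Nash equilibrium maps to a global optimum of this one program. (Note $\sum_i\waterlevel_i=\sum_i\alpha_i+m$ is constant on the feasible set, so minimizing $\sum_i\waterlevel_i^2$ is exactly the equalization that water-filling performs, which is why the directions match.)

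Finally, the objective $\tfrac12\sum_i\waterlevel_i^2$ is strictly convex in the vector $(\waterlevel_i)$, and the set of attainable level vectors is the convex linear image of the allocation polytope; a strictly convex function attains its minimum at a unique point of a convex set, so the minimizing vector $(\waterlevel_i)$ is unique even though the underlying allocation $(x_{ij})$ need not be. Since every equilibrium realizes this same vector, all equilibria allocate the same amount to each good, and by the utility-independence noted above this holds uniformly across all admissible games on $G$. The step I expect to demand the most care is the exact matching in the previous paragraph --- confirming that the per-agent water-filling threshold $\tau_j$ is precisely the budget multiplier $\lambda_j$ and that the inequality direction on skipped goods aligns with the sign of $\mu_{ij}\ge 0$ --- since any mismatch there would break the identification of equilibria with optima of the convex program.
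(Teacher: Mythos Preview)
Your proof is correct, and it takes a genuinely different route from the paper. The paper argues by contradiction with an elementary exchange argument: assuming two equilibria $\NE,\NE'$ differ at some good $p_i$ with $\waterlevel_i'<\waterlevel_i$, it chases the resource displacement through the graph, growing a set $S$ of goods whose levels all dropped by at least $\waterlevel_i-\waterlevel_i'$ in $\NE'$, shows $S=P$, and obtains a contradiction with total-resource conservation. Your approach instead identifies the per-agent water-filling optimality conditions with the KKT system of the single convex program $\min\tfrac12\sum_i\waterlevel_i^2$ over the allocation polytope, and then uses strict convexity in the level vector $(\waterlevel_i)$ to conclude that all KKT points (hence all equilibria) share one level vector.

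What each buys: the paper's argument is self-contained and combinatorial, needing nothing beyond the water-filling characterization already proved; yours is shorter and more conceptual, and in fact simultaneously reproves the potential-game result (Theorem~\ref{thm:potentialGame}) with a \emph{strictly} concave potential $-\tfrac12\sum_i\waterlevel_i^2$ in the levels, from which weak uniqueness drops out immediately. Your opening remark on the concave/convex typo and the utility-independence reduction via Theorem~\ref{Thm:UniqueSolution} is also exactly right and handles the ``across all games on $G$'' clause more explicitly than the paper does. The one place to be careful---and you correctly flagged it---is the sign/direction check that $x_{ij}=0\Rightarrow\waterlevel_i\ge\lambda_j$ matches the water-filling condition $\waterlevel_i\ge\tau_j$; it does, so the identification goes through.
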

\begin{proof} Suppose otherwise that there are two equilibria $\NE$
and $\NE'$ that have different amount of resource $\waterlevel_i$
and $\waterlevel_i'$ allocated to some good $p_i$ (throughout the
rest of the paper whenever it is clear from the context, for any
good $p_x$ we denote by $\waterlevel_x$ and $\waterlevel_x'$ the
amount of resource allocated to $p_x$ in $\NE$ and $\NE'$,
respectively). Without loss of generality assume $\waterlevel_i' <
\waterlevel_i$. Then there must exists some agent $a_j \in
\Neighbors(p_i)$ who is allocating less resource on $p_i$ in
$\NE'$ than in $\NE$, and as a result, $a_j$ must be allocating
more resource on some good $p_k \in \Neighbors(a_j) \backslash
\{p_i\}$ in $\NE'$ because in equilibrium each agent allocates all
of its resources. The fact that $a_j$ is allocating nonzero
resource on $p_i$ in $\NE$ implies $\waterlevel_i \leq
\waterlevel_k$, and for the same reason $\waterlevel_k' \leq
\waterlevel_i'$. Therefore we have $\waterlevel_k - \waterlevel_k'
\geq \waterlevel_i - \waterlevel_i' > 0$.

Now consider the following process: Starting from set $S_0 = \{
p_i \}$, add goods to $S_0$ that share an agent with $p_i$ and
whose total resource have decreased by at least $\waterlevel_i -
\waterlevel_i'$ in $\NE'$; let the new set be $S_1$. Then grow the
set further by adding goods that share an agent with some good in
$S_1$ and whose total resource are reduced by at least
$\waterlevel_i - \waterlevel_i'$ in $\NE'$. Continue this process
until no more goods can be added and let the resulting set be $S$.
By construction every good in $S$ has its total resource decreased
by at least $\waterlevel_i - \waterlevel_i'$ in $\NE'$ than in
$\NE$; in fact, it can be shown that the decrease is exactly
$\waterlevel_i - \waterlevel_i'$ for each good in $S$.

If $S = P$, then we have a contradiction immediately because if
each good in $P$ has its total resource decreased by a positive
amount in $\NE'$ then it implies the agents collectively have a
positive amount of resources not allocated, contradicting the fact
that $\NE'$ is a Nash equilibrium.

We now claim that indeed $S = P$. Suppose otherwise $P = S \cup T$
and $T \neq \emptyset$. Then, $\Neighbors(S)$, the neighboring
agents of $S$ are collectively spending less resources on $S$ in
$\NE'$ than in $\NE$, which implies there exists an agent $a \in
\Neighbors(S)$ who is allocating more resources to a good $p_t\in
T$ in $\NE'$ than in $\NE$ and less resources to a good $p_s\in S$
in $\NE'$ than in $\NE$. By an argument similar to one given
above, we have $\waterlevel_s \leq \waterlevel_t$ and
$\waterlevel_t' \leq \waterlevel_s'$, and thus $\waterlevel_t -
\waterlevel_t' \geq \waterlevel_s - \waterlevel_s' \geq
\waterlevel_i - \waterlevel_i'$. This implies that $p_t$ should be
in $S$ rather than $T$; so we must have $T = \emptyset$ or $S =
P$.

Therefore $\NE$ and $\NE'$ must be equivalent in the sense that
for any good $p_i \in P$, $\waterlevel_i = \waterlevel_i'$; this
allows us to conclude that the Nash equilibrium of NCGG on any
graph is \emph{weakly unique}.
\end{proof}

Next, we move to establish the strong uniqueness result on trees.
We need the following lemma before we proceed to the main theorem
of the section.

\begin{lemma}\label{Lem:MonotoneNE} For any instance
of NCGG on a tree $G = (P, A, E)$, let $\NE$ be a Nash equilibrium
of this game, $\alpha_i$ the ground level of $p_i \in P$ and
$\waterlevel_i$ the total resource allocated on $p_i$ in $\NE$.
For any other instance of NCGG where everything is the same except
that $\alpha_i$ is increased, if $\NE'$ is an equilibrium of this
new instance and $\waterlevel_i'$ is total resource allocated to
$p_i$ in $\NE'$, then $\waterlevel_i' \geq \waterlevel_i$.
\end{lemma}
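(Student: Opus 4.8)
The plan is to mirror the growing-set technique of the weak-uniqueness proof (Theorem \ref{Thm:WeaklyUnique}), but to augment it with a global conservation identity that detects the injected ground level. First I would record the water-filling characterization implied by Theorem \ref{Thm:UniqueSolution}: at any equilibrium, each agent $a_j$ has a single water level $\lambda_j$ such that $\waterlevel_x \geq \lambda_j$ for every $p_x \in \Neighbors(a_j)$, with equality whenever $a_j$ contributes a positive amount to $p_x$. I would also record the conservation identity $\sum_{p_x \in P} \waterlevel_x = \sum_{p_x \in P}\alpha_x + m$, which holds because every agent allocates her entire unit of resource in equilibrium. Writing $\alpha_i' = \alpha_i + \Delta$ with $\Delta > 0$ for the increased ground level (and $\alpha_x' = \alpha_x$ for $x \neq i$), this identity gives $\sum_{p_x \in P} \waterlevel_x' = \sum_{p_x \in P}\waterlevel_x + \Delta$, so the total water strictly increases across the two instances.

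Next I would argue by contradiction. Suppose $\waterlevel_i' < \waterlevel_i$ and set $\delta = \waterlevel_i - \waterlevel_i' > 0$. Starting from $S_0 = \{p_i\}$, I grow a set $S$ of goods whose total resource has dropped by at least $\delta$ in $\NE'$, repeatedly adding any good that shares an agent with the current set and whose water level has decreased by at least $\delta$. The key closure step uses the water-filling condition exactly as in Theorem \ref{Thm:WeaklyUnique}: if an agent $a$ decreases her contribution to some $p_s \in S$ and increases it on a neighboring good $p_t$, then $x_{s,a} > 0$ in $\NE$ gives $\waterlevel_s \leq \lambda_a \leq \waterlevel_t$, while $x_{t,a}' > 0$ in $\NE'$ gives $\waterlevel_t' = \lambda_a' \leq \waterlevel_s'$; subtracting yields $\waterlevel_t - \waterlevel_t' \geq \waterlevel_s - \waterlevel_s' \geq \delta$, so $p_t$ must join $S$.

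I would then show $S = P$. If $T = P \setminus S$ were nonempty, I compare the resource that $\Neighbors(S)$ spends on $S$ in the two equilibria: since every good of $S$ lost at least $\delta$ while the ground levels of $S$ rose by exactly $\Delta$ (as $p_i \in S$), the agents neighboring $S$ spend at least $|S|\delta + \Delta$ less on $S$ in $\NE'$. Because every agent still allocates her full unit, some agent $a \in \Neighbors(S)$ must therefore move resource off a good $p_s \in S$ and onto a good $p_t \in T$; the closure step above then forces $p_t \in S$, contradicting $p_t \in T$. Hence $S = P$. But $S = P$ means every good loses at least $\delta > 0$, so $\sum_{p_x \in P} \waterlevel_x' \leq \sum_{p_x \in P}\waterlevel_x - n\delta < \sum_{p_x \in P}\waterlevel_x$, contradicting the conservation identity $\sum_{p_x \in P}\waterlevel_x' = \sum_{p_x \in P}\waterlevel_x + \Delta$. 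Therefore $\waterlevel_i' \geq \waterlevel_i$.

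I expect the main obstacle to be the closure step: getting the directions of the water-filling inequalities right in both instances simultaneously, and correctly folding the injected $\Delta$ into the boundary computation so that the spending on $S$ strictly decreases even though its ground level increased. I note the tree hypothesis is not essential to this argument --- the same reasoning applies on an arbitrary graph, with $\waterlevel_i$ and $\waterlevel_i'$ well defined by weak uniqueness --- and is stated for trees only because that is the setting in which the lemma is subsequently applied.
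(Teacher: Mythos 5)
Your proof is correct, but it follows a genuinely different route from the paper's. The paper's own proof exploits the tree structure directly: it roots the tree at $p_i$, observes that a drop at $p_i$ (despite the raised ground level) forces some neighboring agent to shift resource onto a child good, whose level then drops by at least as much by the same water-filling inequalities you use; iterating, the chain of drops propagates strictly downward and must terminate at a leaf good, where a drop is impossible because the leaf's unique neighboring agent is allocating \emph{more} to it while its ground level is unchanged --- contradiction. Acyclicity and the existence of leaves are what terminate that argument. You instead run the growing-set closure of Theorem \ref{Thm:WeaklyUnique} and finish with a global conservation identity ($\sum_x \waterlevel_x = \sum_x \alpha_x + m$, valid since every agent exhausts her unit in equilibrium --- the same fact the paper uses in Theorem \ref{Thm:WeaklyUnique}), so that $S = P$ contradicts the fact that total water must rise by $\Delta$. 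Your closure inequalities and the boundary accounting (spending on $S$ drops by at least $|S|\delta + \Delta$, forcing some agent to move resource from $S$ into $T$) are all sound. What each approach buys: the paper's descent is local and needs no global bookkeeping, but is intrinsically tied to trees; yours proves the stronger statement that equilibrium water levels are monotone in ground levels on \emph{arbitrary} bipartite graphs, and unifies the lemma with the weak-uniqueness machinery. Your closing remark is also correct --- the tree hypothesis is not needed for this lemma (it is needed only downstream, in Theorem \ref{Thm:StronglyUnique}, where strong uniqueness genuinely fails off trees, as the $2\times 2$ example shows). One pedantic caveat: the conservation identity as written assumes every agent has at least one neighboring good; isolated agents contribute the same constant to both instances, so nothing breaks.
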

\begin{proof} Without loss of generality assume all leafs of the
tree are goods (because a leaf agent has no choice but to allocate
all her resources to the unique good she is connected to) and root
the tree at $p_i$. Suppose $\waterlevel_i' < \waterlevel_i$. Since
$\alpha_i' > \alpha_i$, it must be the case that there exists some
agent $a_j \in \Neighbors(p_i)$ who is allocating less resource on
$p_i$ in $\NE'$ than in $\NE$, this in turn implies that $a_j$ is
allocating more resource to some good $p_k \in \Neighbors
\backslash \{p_i\}$ in $\NE'$ than in $\NE$. Therefore we have
$\waterlevel_i \leq \waterlevel_k$ and $\waterlevel_k' \leq
\waterlevel_i'$ and thus $\waterlevel_k - \waterlevel_k' \geq
\waterlevel_i - \waterlevel_i' > 0$. If $k$ is a leaf then this is
obviously a contradiction. Otherwise, we can continue the above
reasoning recursively and eventually we will reach a contradiction
by having a leaf good whose total resource decreases in $\NE'$
whereas at the same time its unique neighboring agent is
allocating more resources to it.
\end{proof}

\begin{theorem}\label{Thm:StronglyUnique} The Nash equilibrium is
\emph{strongly unique} across all NCGG played on a given tree $G =
(P,A,E)$, as long as $\Utility_j$ is increasing, convex and
differentiable for any agent $a_j$.
\end{theorem}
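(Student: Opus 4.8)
The plan is to upgrade the weak uniqueness already established in Theorem~\ref{Thm:WeaklyUnique} to strong uniqueness by exploiting the tree structure. By that theorem, any two equilibria $\NE$ and $\NE'$ agree on the total resource $\waterlevel_i$ allocated to every good $p_i$; what remains is to show that each individual agent's allocation $x_{ij}$ is also pinned down. The key structural fact about trees is that removing any edge $(p_i,a_j)$ disconnects the graph into two components, so the flow of resources across that edge is constrained in a way it is not on a general graph.

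First I would root the tree at an arbitrary good and argue by induction from the leaves upward, or equivalently count degrees of freedom. The natural approach is to show that the total-resource profile $\{\waterlevel_i\}$, which is fixed across all equilibria, together with the tree structure, uniquely determines every edge allocation. Consider a leaf agent: she is forced to put all her resource on her unique neighbor, so her allocation is determined. Consider a leaf good $p_i$ with ground level $\alpha_i$: the total $\waterlevel_i$ is fixed, and $\alpha_i$ is fixed, so the amount its unique neighboring agent contributes to $p_i$ equals $\waterlevel_i - \alpha_i$ and is thus determined. The plan is to peel off leaves one at a time: once a leaf good's incoming contribution is fixed, that amount is subtracted from the contributing agent's budget, and we may delete the leaf and treat the agent as having a reduced endowment on a smaller tree; once a leaf agent's contributions are all fixed, her edges can be deleted and the corresponding goods' ground levels effectively increased. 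Here I would invoke Lemma~\ref{Lem:MonotoneNE} to guarantee that these reductions are well-behaved, i.e.\ that fixing an edge and shrinking the tree preserves the equilibrium structure and its total-resource profile.

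The main obstacle I anticipate is the bookkeeping of the peeling argument: I must argue that at each step there is always a leaf whose incident edge allocation is already forced by the fixed totals, and that deleting it yields a strictly smaller NCGG instance (on a tree) whose equilibria are in bijection with the restrictions of the original equilibria. The subtlety is that a node of the tree may have several incident edges, and I can only determine an edge's flow once all but one of the flows at one of its endpoints are known; a clean way to see this always terminates is to observe that a tree with $|P|+|A|$ nodes has exactly $|P|+|A|-1$ edges, matching the number of independent constraints once the $\{\waterlevel_i\}$ are fixed, so the linear system determining the edge flows is exactly determined and has a unique solution. I would phrase this as: the edge variables satisfy one conservation equation per good (total equals $\waterlevel_i - \alpha_i$) and one per agent (contributions sum to $1$), and on a tree these equations form a triangular system solvable by leaf elimination.

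Finally, I would conclude by combining the two ingredients: weak uniqueness fixes the node totals $\{\waterlevel_i\}$ identically across all equilibria, and the leaf-elimination argument shows these totals determine every edge allocation $x_{ij}$ uniquely. Hence any two equilibria $\NE$ and $\NE'$ coincide on every edge, which is exactly strong uniqueness. I expect the proof to be short once the reduction to solving a triangular flow system on the tree is set up, with Lemma~\ref{Lem:MonotoneNE} ensuring the monotone consistency needed when shrinking the instance.
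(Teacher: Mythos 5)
Your proposal is correct, and it takes a genuinely different and more elementary route than the paper. The paper's proof is an induction on the size of the tree: it defines the set $E(\NE)$ of edges $(p_i,a_j)$ across which equilibrium water levels strictly drop, splits the tree into subtrees by deleting these edges, and uses Lemma~\ref{Lem:MonotoneNE} (monotonicity of equilibrium water levels in the ground level) together with a separate claim handling the all-levels-equal case to show that $E(\NE)=E(\NE')$ for every equilibrium $\NE'$, so that the induction hypothesis applies subtree by subtree. You instead reduce strong uniqueness to weak uniqueness plus a purely combinatorial fact about trees: once Theorem~\ref{Thm:WeaklyUnique} fixes every total $\waterlevel_i$ across all equilibria, the edge variables of any equilibrium satisfy $\sum_{a_j\in\Neighbors(p_i)} x_{ij}=\waterlevel_i-\alpha_i$ for each good and $\sum_{p_i\in\Neighbors(a_j)} x_{ij}=1$ for each agent (agents exhaust their budgets in equilibrium since utilities are increasing), and on a tree this node-sum system has at most one solution --- the difference of two solutions has zero sum at every node, and deleting leaves one at a time forces it to vanish on every edge. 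That leaf-elimination argument is airtight, and it reveals that Lemma~\ref{Lem:MonotoneNE} is not needed at all for this theorem: your invocation of it to make the peeling ``well-behaved'' is superfluous, since no re-equilibration of a smaller game is ever required --- you are merely solving a linear system whose right-hand sides are already pinned down. As for what each approach buys: the paper's induction yields extra structural information (any equilibrium decomposes into independent sub-equilibria across the inactive edges), while your argument is shorter, uses only Theorem~\ref{Thm:WeaklyUnique}, and isolates exactly where the tree hypothesis enters --- on a graph with a cycle, such as the paper's $2\times 2$ example, the incidence system has a nontrivial kernel, which is precisely how weakly-unique-but-not-strongly-unique equilibria arise.
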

\begin{proof} Again without loss of generality assume leafs
are all goods. We have the following claim. \vspace{0.3em}

\noindent{\underline{Claim}}. \emph{For any NCGG instance on a
tree $G = (A,P,E)$, if there is an equilibrium $\NE$ where total
resource allocated is the same across all goods, then $\NE$ is the
strongly unique Nash equilibrium.}

\noindent{\underline{Proof}}. Suppose $\NE$ is not strongly
unique. Let $\NE'$ be a different Nash equilibrium. By Theorem
\ref{Thm:WeaklyUnique} $\NE'$ can only be weakly different from
$\NE$. Since $\NE$ and $\NE'$ are weakly different there must
exist edge $(p_i, a_j)$ such that $a_j$ is allocating different
amount of resource in $\NE$ and $\NE'$; without loss of
generality, assume $a_j$ is allocating less resource in $\NE'$
than in $\NE$. Root the tree at $p_i$, then $a_j$ must be
allocating more resource in $\NE'$ to one of its child $p_k \in
\Neighbors(a_j) \backslash \{p_i\}$. Note given the amount of
resource allocated by $a_j$ on $p_k$, the game played at the
subtree rooted at $p_k$ can be viewed as independent of the game
played in the rest of the tree, by viewing the resource allocated
by $a_j$ on $p_k$ as part of the ground level of $p_k$. Now that
the ground level has increased, by Lemma \ref{Lem:MonotoneNE} any
equilibrium on the subtree rooted at $p_k$ must not have the total
resource allocated on $p_k$ decreased, so we have $\waterlevel_k'
\geq \waterlevel_k$. If $\waterlevel_k' > \waterlevel_k $, then
this is a contradiction to weak uniqueness. If $\waterlevel_k' =
\waterlevel_k$, then one of $p_k$'s child must be allocating less
resource to $p_k$ in $\NE'$ than in $\NE$ and we can repeat the
above reasoning recursively. Continue this process until we either
reach the conclusion that $\NE$ and $\NE'$ are strongly different,
which is a contradiction, or reach a leaf good whose allocated
resource in $\NE'$ is the same as that in $\NE$ even when his
unique neighboring agent is allocating more resource to it in
$\NE'$, which is again a contradiction. \vspace{0.3em}

\noindent \underline{Resume Proof of Theorem}. We prove this
theorem by giving an induction on the size of the tree $N = |A| +
|P|$. First note the equilibrium is unique when $N \leq 2$ (in the
trivial case where either $E = \emptyset$, the claim is vacuously
true). Assume the theorem is true for any tree of size $N \leq K$,
consider the case $N = K+1$.

For any instance $G_{K+1}$ with $N = K+1$, let $\NE$ be a Nash
equilibrium (whose existence is implied by Theorem
\ref{Thm:Existence}). We want to show that $\NE$ is strongly
unique. Let $$E(\NE) = \{(p_i,a_j) ~|~ \mbox{$\waterlevel_i >
\waterlevel_k$ and $x_{jk} > 0$ in $\NE$} \}$$ If $E(\NE) =
\emptyset$ then it must be the case that the total resource
allocated is the same across all goods, and by the above claim
$\NE$ is thus strongly unique and we are through. Otherwise,
partition $G$ into subtrees by removing $E(\NE)$ from $E$. Note
the size of each subtree thus resulted is at most $N$, so by
induction they each has a strongly unique equilibrium; this
implies that if we can prove $E(\NE') = E(\NE)$ for any
equilibrium $\NE'$, then $\NE' = \NE$ and we are again through. To
this end, suppose $G_{K+1}$ has a weakly different equilibrium
$\NE'$ such that $(p_i,a_j) \in E(\NE)$ and $(p_i,a_j) \notin
E(\NE')$ and consider the following two cases. \vspace{0.3em}

\noindent {\sc Case I: $a_j$ is allocating resource to $p_i$ in
$\NE'$.} Consider the game played on the subtree of $G_{K+1}$
rooted at $p_i$ and not containing $a_j$. Since $a_j$ allocates
more resource on $p_i$ in $\NE'$ than in $\NE$, by Lemma
\ref{Lem:MonotoneNE}  $\waterlevel_i' \geq \waterlevel_i$. On the
other hand, $a_j$ must be allocating less resource to some other
good $p_k$ in $\NE'$ than in $\NE$, so again by Lemma
\ref{Lem:MonotoneNE} $\waterlevel_k \geq \waterlevel_k'$. Note we
also have $\waterlevel_i > \waterlevel_k$ and thus conclude that
$\waterlevel_i' > \waterlevel_k'$; since $a_j$ allocates non-zero
resource to $p_i$ in $\NE'$, she is not acting optimally and this
gives a contradiction to the fact that $\NE'$ is an equilibrium.

\noindent {\sc Case II: $a_j$ is not allocating resource to $p_i$
in $\NE'$.} Since the subtree rooted at $p_i$ and not containing
$a_j$ is of size at most $N-1$, by induction we have
$\waterlevel_i = \waterlevel_i'$. Since $a_j$ is allocating the
same total amount of resource to $\Neighbors(a_j) \backslash p_i$,
there exists $p_k$ on which $a_j$ is allocating nonzero resource
in $\NE$ and not allocating strictly more resource in $\NE'$ than
in $\NE$; by Lemma \ref{Lem:MonotoneNE} this implies
$\waterlevel_k' \leq \waterlevel_k$. Note we also have
$\waterlevel_i > \waterlevel_k$ because $(p_i,a_j) \in E(\NE)$,
and thus we have $\waterlevel_i' > \waterlevel_k'$. Consider the
following two cases: {\sc Case 1)} If $a_j$ allocates nonzero
resource to $p_k$ in $\NE'$ then $\waterlevel_i' = \waterlevel_k'$
because $(p_i,a_j) \notin E(\NE')$; but this is a contradiction.
{\sc Case 2)} If $a_j$ allocates zero resource to $p_k$ then there
exists good $p_l \in \Neighbors(a_j) \backslash \{p_i,p_k\}$ on
which $a_j$ is allocating strictly more resource in $\NE'$ than in
$\NE$. The fact that $(p_i,a_j) \notin E(\NE')$ implies
$\waterlevel_i' = \waterlevel_l'$, so we have $\waterlevel_l' >
\waterlevel_k'$; but this is a contradiction to the fact that
$\NE'$ is an equilibrium. \vspace{0.3em}

Now we conclude that $E(\NE) = E(\NE')$ and this completes the
proof.
\end{proof}

\section{Nash Dynamics}

Pick any utility function that is increasing, concave and
differentiable, say $\Utility(x) = \sqrt{x}$, and define potential
function $\Psi(\waterlevel_1,...,\waterlevel_n) = \sum_{i=1}^n
\sqrt{\waterlevel_i}$. It is clear that for any agent $a_j$,
whenever $a_j$ updates her allocation such that increases her
total utility, the potential increases as well. This proves the
following theorem.

\begin{theorem}\label{thm:potentialGame} NCGG is a potential
game.
\end{theorem}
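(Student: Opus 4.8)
The plan is to exhibit an \emph{ordinal} potential rather than an exact one. Because different agents may carry different utility functions, NCGG cannot be an exact potential game in general: a single state-dependent function cannot simultaneously reproduce the exact utility increments of two agents whose utilities $\Utility_j$ differ, since the same change in water levels yields different increments for them. I would therefore fix one reference function $\phi$ that is increasing, concave and differentiable (taking $\phi(x)=\sqrt{x}$ is convenient) and define the potential $\Psi(s)=\sum_{i=1}^n \phi(\waterlevel_i)$, where $\waterlevel_i$ is the total resource on good $p_i$ in state $s$. An ordinal potential is all that is needed downstream, since it already guarantees that best/better-response dynamics cannot cycle and must approach a pure equilibrium.

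The first step is a locality observation: a unilateral deviation by agent $a_j$ alters only the water levels of the goods in $\Neighbors(a_j)$ and leaves every other $\waterlevel_i$ untouched. Hence the induced change in $\Psi$ is exactly $\sum_{p_i \in \Neighbors(a_j)}\bigl(\phi(\waterlevel_i')-\phi(\waterlevel_i)\bigr)$, i.e.\ precisely the utility change $a_j$ would experience if her own utility function were $\phi$. The decisive second step is to invoke Theorem~\ref{Thm:UniqueSolution}: the water-filling allocation solving an agent's single-agent optimization over $\Neighbors(a_j)$ is identical for every increasing, concave, differentiable utility, and in particular coincides for $\Utility_j$ and for the reference $\phi$. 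Thus $a_j$'s best response maximizes $\sum_{p_i \in \Neighbors(a_j)}\Utility_j(\waterlevel_i)$ and $\sum_{p_i \in \Neighbors(a_j)}\phi(\waterlevel_i)$ at the \emph{same} allocation; so whenever $a_j$ strictly improves her own utility by moving to her best response, she also strictly increases $\sum_{p_i \in \Neighbors(a_j)}\phi(\waterlevel_i)$ and therefore strictly increases $\Psi$. This exhibits $\Psi$ as a potential for the game.

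I expect the delicate point to be the passage from ``the best response agrees with $\phi$'s best response'' to a genuine ordinal-potential statement, and this is exactly where Theorem~\ref{Thm:UniqueSolution} is indispensable. Away from the optimum, $\Utility_j$ and $\phi$ need not order moves the same way: two reachable allocations whose water-level vectors are incomparable in the majorization order can be ranked oppositely by $\Utility_j$ and by $\phi$, so a crude ``any utility-improving move raises $\Psi$'' claim is too strong to assert without care. The safe route is to phrase the property for best-response updates, where the maximizer is unique and utility-independent, so that an improving move and an increase in $\Psi$ necessarily occur together. The resulting monotonicity of $\Psi$ along such updates, combined with its boundedness on the compact strategy space, is what the convergence analysis of the next section then exploits.
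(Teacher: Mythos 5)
Your proposal is correct and uses the \emph{same} potential function as the paper, $\Psi(s)=\sum_{i}\sqrt{\waterlevel_i}$, but the justification is genuinely different --- and sounder. The paper's entire argument is the assertion that ``whenever $a_j$ updates her allocation such that [it] increases her total utility, the potential increases as well,'' i.e.\ that $\Psi$ is a generalized ordinal potential for \emph{arbitrary} improving moves; no proof is given beyond ``it is clear.'' Your instinct that this blanket claim is too strong is not just prudence: the claim is false. Take a single agent entitled to three goods, all ground levels $0$, whose utility $\Utility$ is a smooth, strictly increasing, concave approximation of $\min(x,1/2)$ with $\Utility(0)=0$. Moving her allocation from $(0.6,\,0.2,\,0.2)$ to $(0.5,\,0.5,\,0)$ raises her utility from roughly $0.5+0.2+0.2=0.9$ to roughly $0.5+0.5+0=1.0$, yet $\Psi$ \emph{drops} from $\sqrt{0.6}+2\sqrt{0.2}\approx 1.67$ to $2\sqrt{0.5}\approx 1.41$. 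The two water-level vectors are exactly the majorization-incomparable pair you warned about, ranked oppositely by $\Utility$ and by $\sqrt{\cdot}$. So the paper's proof, read literally, has a gap, and your repair --- restricting to best-response updates, where Theorem~\ref{Thm:UniqueSolution} guarantees the unique maximizer of $\sum_{p_i\in\Neighbors(a_j)}\Utility_j(\waterlevel_i)$ coincides with that of $\sum_{p_i\in\Neighbors(a_j)}\phi(\waterlevel_i)$, so a strict improvement forces a strict increase of $\Psi$ --- is the correct way to make the argument rigorous.

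The trade-off between the two versions is worth noting. What the paper asserts (if it were true) would make $\Psi$ a generalized ordinal potential and would justify its follow-up remark that ``better/best response Nash dynamics always converge.'' What you actually prove is the weaker statement that NCGG is a \emph{best-response} potential game: improvement and potential increase coincide only at best-response moves, so convergence of arbitrary better-response dynamics is not covered. With heterogeneous utilities this appears to be all that this choice of $\Psi$ can deliver (and, as you also note, an exact potential is ruled out in general; when all agents share one utility $\Utility$, $\sum_i\Utility(\waterlevel_i)$ is an exact potential). Fortunately, the weaker property suffices for everything the paper subsequently does: Algorithm~\ref{NashDynamics} performs discretized best responses and its convergence proof uses a different, combinatorial potential altogether.
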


Therefore, better/best response Nash dynamics always converge.
However it is not clear how fast the convergence is as the
increment in $a_j$'s total utility can be either larger or smaller
than the increment of the potential, depending both on
$\Utility_j(\cdot)$ and the amount of resources already allocated
to $a_j$'s neighboring goods. In the rest of the section, we
present a particular Nash dynamics where we can show fast
convergence to an $\epsilon$-approximate Nash equilibrium. We only
give details for the best response Nash dynamics (Algorithm
\ref{NashDynamics}), and it is easy to see the same convergence
result holds for the corresponding better response Nash dynamics
as well. To this end we consider $K$-discretized version of the
game, where each agent has access to a total of $K$ identical
\emph{atomic} resources, each of volume $1/K$. We start by giving
the following two lemmas.

\begin{lemma}\label{discretediff} A solution to the $K$-discretized
CGP is optimal iff the following two conditions are satisfied: 1)
the agent has allocated all of its $K$ atomic units of resource;
2) for any two goods $p_i, p_j \in \Patch$, $\waterlevel_i -
\waterlevel_j
> 1/K$ (where $\waterlevel_i = \alpha_i + x_i$ and $\waterlevel_j
= \alpha_j + x_j$) implies $x_i = 0$.
\end{lemma}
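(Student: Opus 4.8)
The plan is to recast the $K$-discretized CGP as a selection problem over marginal contributions, which makes both optimality conditions transparent. For each good $p_i$ and each integer $t \geq 1$, define the marginal value of the $t$-th atomic unit placed on $p_i$ as $g_i(t) = \Utility(\alpha_i + t/K) - \Utility(\alpha_i + (t-1)/K)$. If an allocation assigns $k_i := K x_i$ atomic units to $p_i$, its total utility is $\sum_i \Utility(\alpha_i) + \sum_i \sum_{t=1}^{k_i} g_i(t)$; since the first sum is a constant, maximizing utility is equivalent to choosing nonnegative integers $k_i$ with $\sum_i k_i \leq K$ so as to maximize $\sum_i \sum_{t=1}^{k_i} g_i(t)$. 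The crucial structural fact, which I would establish first, is that concavity of $\Utility$ makes the sequence $g_i(1) \geq g_i(2) \geq \cdots$ non-increasing in $t$, so each good's selected units always form a prefix $\{1,\dots,k_i\}$ of its marginal sequence.

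I would next dispose of Condition 1: because $\Utility$ is (strictly) increasing, every $g_i(t) > 0$, so any feasible solution leaving an atomic unit unallocated is strictly improved by placing that unit on any good; hence optimality forces $\sum_i k_i = K$. Given this, the problem is exactly to select a set of $K$ marginal ``items'' $g_i(t)$ of largest total value subject to the prefix constraint. By the non-increasing property the constraint is free: a maximum-weight selection of $K$ items is simply a set of $K$ largest $g_i(t)$ values (ties broken arbitrarily), because whenever $g_i(t)$ is selected the item $g_i(t-1) \geq g_i(t)$ is at least as valuable and hence may also be taken. Thus an allocation using all $K$ units is optimal if and only if every selected marginal is at least every unselected marginal.

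It then remains to show this ``top-$K$'' condition coincides with Condition 2. The relevant comparison is between the last unit on a good $p_i$ that receives resource, of value $g_i(k_i) = \Utility(\waterlevel_i) - \Utility(\waterlevel_i - 1/K)$, and the next unit that could go on any $p_j$, of value $g_j(k_j+1) = \Utility(\waterlevel_j + 1/K) - \Utility(\waterlevel_j)$. Reading $g_i(k_i)$ as the marginal gain of $\Utility$ at level $\waterlevel_i - 1/K$ and $g_j(k_j+1)$ as the marginal gain at level $\waterlevel_j$, concavity gives $g_i(k_i) \geq g_j(k_j+1)$ exactly when $\waterlevel_i - 1/K \leq \waterlevel_j$, i.e. when $\waterlevel_i - \waterlevel_j \leq 1/K$. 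Hence the top-$K$ condition ``$g_i(k_i) \geq g_j(k_j+1)$ for every $i$ with $x_i > 0$ and every $j$'' is precisely the requirement that $\waterlevel_i - \waterlevel_j > 1/K$ forces $x_i = 0$, which is Condition 2. Equivalently, one can phrase the entire argument as a single-unit exchange: transferring one unit from $p_i$ to $p_j$ changes utility by $\Delta = [\Utility(\waterlevel_j + 1/K) - \Utility(\waterlevel_j)] - [\Utility(\waterlevel_i) - \Utility(\waterlevel_i - 1/K)]$, and concavity makes $\Delta > 0$ exactly when $\waterlevel_i - \waterlevel_j > 1/K$ and $x_i > 0$.

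The main obstacle is the reverse ($\Leftarrow$) direction: showing that the purely \emph{local} Conditions 1--2 already certify \emph{global} optimality. The marginal-item reformulation is what makes this work, since it reduces global optimality to a greedy/matroid-style statement (top-$K$ selection) whose correctness rests entirely on the monotonicity of $g_i(t)$ in $t$. I would also flag that the equivalence is a genuine iff only under strict diminishing returns: with $\Utility$ merely weakly concave (e.g. linear), many optimal allocations violate Condition 2, so strictness of concavity is exactly what upgrades $\Delta \geq 0$ to $\Delta > 0$ and makes the characterization tight.
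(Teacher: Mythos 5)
Your proof is correct, but it takes a genuinely different route from the paper's. The paper argues both directions by direct exchange on water levels: for necessity it moves one atomic unit from an over-full good to an under-full one and invokes concavity to claim a strict gain; for sufficiency it compares the candidate solution against a hypothetical true optimum, finds goods $p_i, p_j$ with $\waterlevel_i - \waterlevel_i' \geq 1/K$ and $\waterlevel_j' - \waterlevel_j \geq 1/K$, and runs a two-case analysis (distinguishing whether both differences are exactly $1/K$) to contradict Condition 2. You instead decompose utility into per-unit marginals $g_i(t)$, note that concavity makes them non-increasing along each good, and reduce optimality to a top-$K$ selection statement, after which both directions of the lemma fall out of a single equivalence chain (optimal $\Leftrightarrow$ top-$K$ selection $\Leftrightarrow$ Condition 2, with positivity of the marginals giving Condition 1). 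Your route buys uniformity and transparency: it replaces the paper's somewhat delicate sufficiency case analysis with the standard greedy/exchange argument for separable concave maximization, and it isolates exactly where concavity enters; the paper's approach buys self-containedness, never needing the reformulation. Your closing caveat is also well taken, and in fact applies equally to the paper's own proof: the ``only if'' direction genuinely requires strict concavity (equivalently, strictly decreasing increments), since for linear $\Utility$ every full allocation is optimal yet many violate Condition 2, so the paper's assertion that the unit move yields ``strictly higher'' utility is valid only under strictness. The ``if'' direction --- the one actually invoked later to certify that the dynamics of Algorithm \ref{NashDynamics} terminate at best responses --- survives under weak concavity in both your argument and the paper's.
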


\begin{proof} First we prove the `only if' direction. It is
obvious that an optimal solution must have allocated all of its
$K$ atomic units of resource because the utility function is
increasing, so we focus on the proof of the second condition.
Suppose otherwise we have $p_i, p_j \in \Patch$ with
$\waterlevel_i - \waterlevel_j > 1/k$, where $\waterlevel_i =
\alpha_i + x_i$, $\waterlevel_j = \alpha_j + x_j$ and $x_i > 0$.
Construct another solution by moving one atomic unit of resource
from good $p_i$ to $p_j$ gives a new solution of total utility
strictly higher because the utility function is increasing and
concave. Therefore we have a contradiction.

Next we prove the `if' direction of the lemma. Suppose the
solution $x$ is not optimal. Let $\waterlevel_k$ and
$\waterlevel'_k$ (where $p_k \in \Patch$) denote the total
resource induced by this `suboptimal' solution and a true optimal
solution $x'$, respectively. Since an optimal solution must have
allocated all of its $K$ units of atomic resource among the goods,
it must be true that there exist $p_i, p_j \in \Patch$ such that
$\waterlevel_i - \waterlevel_i' \geq 1/K$ and $\waterlevel_j' -
\waterlevel_j \geq 1/K$, and if both inequality holds in equality,
then $\waterlevel_i' \neq \waterlevel_j$ (because otherwise $x$
and $x'$ are essentially the same, which means $x$ is already
optimal). Note $\waterlevel_j' - \waterlevel_j \geq 1/K$ implies
that good $j'$ has resource allocated to it in the optimal
solution (i.e. $x_j' \geq 1/K$), so by the `only if' part of proof
above, we must have $\waterlevel_i' \geq \waterlevel_j' - 1/K$.
Now we show that $\waterlevel_i - \waterlevel_j > 1/K$ by
considering the following two cases:

{\sc Case I: $(\waterlevel_i - \waterlevel_i') + (\waterlevel_j' -
\waterlevel_j) > 2/K$}. In this case, it is easily checked that
$\waterlevel_i - \waterlevel_j > 1/K$.

{\sc Case II: $\waterlevel_i - \waterlevel_i' = 1/K$ and
$\waterlevel_j' - \waterlevel_j = 1/K$}. As discussed above, we
must not have $\waterlevel_i' = \waterlevel_j$. In fact, we must
have $\waterlevel_i' > \waterlevel_j$ because otherwise we will
have $\waterlevel_j' = \waterlevel_j + 1/K > \waterlevel_i' +
1/K$, which is a contradiction to optimality because $x_j' \geq
1/K$. Therefore, again we have reached the conclusion that
$\waterlevel_i - \waterlevel_j > 1/K$.

Now note $\waterlevel_i -\waterlevel_i' \geq 1/K$ implies $x_i
\geq 1/K$, but this is a contradiction to $\waterlevel_i -
\waterlevel_j > 1/K$, which by assumption implies $x_i = 0$.
Therefore, $x$ must itself be an optimal solution.
\end{proof}

\begin{lemma}\label{approxlemma} For any $\epsilon > 0$, an
optimal solution to the $K$-discretized common goods problem,
where $K = 1/\Utility^{-1}(\epsilon/n)$, is an
$\epsilon$-approximation to the optimal solution in the continuous
common goods problem.
\end{lemma}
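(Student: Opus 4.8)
The plan is to exhibit a feasible solution to the $K$-discretized CGP whose value is within $\epsilon$ of the continuous optimum, and then invoke optimality of the discrete solution. First I would let $x^\ast = (x_1^\ast,\dots,x_n^\ast)$ denote an optimal solution to the continuous CGP (its existence and uniqueness are guaranteed by Theorem~\ref{Thm:UniqueSolution}) and write $\mathrm{OPT} = \sum_{i=1}^n \Utility(\alpha_i + x_i^\ast)$ for its value. From $x^\ast$ I construct a discrete allocation $\tilde{x}$ by rounding each coordinate down to the nearest multiple of $1/K$, i.e.\ $\tilde{x}_i = \lfloor K x_i^\ast \rfloor / K$. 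Since $\tilde{x}_i \le x_i^\ast$ for every $i$, we have $\sum_i \tilde{x}_i \le \sum_i x_i^\ast = 1$, so $\tilde{x}$ uses at most the $K$ available atomic units and is feasible for the $K$-discretized problem.

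The heart of the argument is to bound the per-good utility loss incurred by this rounding. For each good $p_i$, set $\delta_i = x_i^\ast - \tilde{x}_i \in [0, 1/K)$ and $a_i = \alpha_i + \tilde{x}_i \ge 0$, so that the loss at $p_i$ is $\Utility(a_i + \delta_i) - \Utility(a_i)$. Here I would use the fact that a concave function with $\Utility(0)=0$ has its increments over an interval of fixed length maximized at the origin: since $\Utility'$ is nonincreasing, $\Utility(a_i + \delta_i) - \Utility(a_i) = \int_{a_i}^{a_i+\delta_i}\Utility'(t)\,dt \le \int_0^{\delta_i}\Utility'(t)\,dt = \Utility(\delta_i)$, and since $\Utility$ is increasing with $\delta_i < 1/K$ this is at most $\Utility(1/K)$. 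By the choice $K = 1/\Utility^{-1}(\epsilon/n)$ we have $\Utility(1/K) = \epsilon/n$, so the loss at each good is at most $\epsilon/n$.

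Summing over all $n$ goods then yields $\mathrm{OPT} - \sum_{i=1}^n \Utility(\alpha_i + \tilde{x}_i) \le n \cdot (\epsilon/n) = \epsilon$, so the feasible discrete allocation $\tilde{x}$ already attains value at least $\mathrm{OPT} - \epsilon$. Because an optimal $K$-discretized solution is by definition no worse than $\tilde{x}$, it too achieves value at least $\mathrm{OPT} - \epsilon$, which is exactly the claimed additive $\epsilon$-approximation.

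I expect the only delicate point to be the concavity step bounding each increment by $\Utility(1/K)$; the remaining pieces (feasibility of the rounded allocation and the final summation) are routine. One minor technicality is that $1/\Utility^{-1}(\epsilon/n)$ need not be an integer, which I would handle by taking $K = \lceil 1/\Utility^{-1}(\epsilon/n)\rceil$: a finer granularity only pushes $\Utility(1/K)$ below $\epsilon/n$ and hence only strengthens the bound.
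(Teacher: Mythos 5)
Your proof is correct, but it takes a genuinely different route from the paper's. You argue in the "forward" direction: start from the continuous optimum $x^\ast$, round each coordinate down to a multiple of $1/K$, bound the per-good loss via subadditivity of a concave function vanishing at zero ($\Utility(a+\delta)-\Utility(a)\le \Utility(\delta)\le \Utility(1/K)=\epsilon/n$), and then invoke optimality of the discrete solution. The paper argues in the reverse direction: it starts from the \emph{discrete} optimum, uses the characterization of Lemma~\ref{discretediff} to show that all goods in its support have nearly equal totals, identifies a set $\Water$ of goods that contains the support of the continuous optimum, and augments the discrete allocation with at most $n$ extra atomic units (one per good in $\Water$) so that the augmented allocation dominates the continuous optimum, giving $OPT - OPT_K \le n\,\Utility(1/K)$. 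Both arguments ultimately rest on the same elementary fact --- adding $1/K$ of resource to any good raises utility by at most $\Utility(1/K)$, by concavity and $\Utility(0)=0$ --- but yours avoids Lemma~\ref{discretediff} and the water-level bookkeeping entirely, so it is shorter and self-contained; the paper's version has the side benefit of reusing the structural characterization of discrete optima it had just established. Two minor points in your write-up, neither a gap: if the $K$-discretized problem is taken to require that all $K$ units be placed, your rounded allocation $\tilde{x}$ may leave units over, but assigning the leftovers arbitrarily only increases utility since $\Utility$ is increasing, so $OPT_K \ge OPT - \epsilon$ still follows; and your handling of the non-integrality of $K$ by taking a ceiling is sound (the paper glosses over this point).
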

\begin{proof} Denote by $OPT$ and $OPT_K$ the optimal utility
    attained by an optimal solution in the continuous version and the
    $K$-discretized version, respectively; denote by $\Water^*$ and
    $\Water^*_K$ the set of goods to which non-zero resource is
    allocated in the two optimal solutions, respectively. By Lemma
    \ref{discretediff}, any two goods in $\Water^*_K$ must have
    their total resources allocated differ by at most $1/K$, i.e.
    $\waterlevel_{max} - \waterlevel_{min} \leq 1/K$, where
    $\waterlevel_{min} = \min\{\waterlevel_i ~|~ p_i \in \Water^*_K\}$
    and $\waterlevel_{max} = \max\{\waterlevel_i ~|~ p_i \in
    \Water^*_K\}$. Since the agent has access to $n$ goods, it must
    be the case that $\waterlevel_{min} \geq 1/n - 1/K$ because
    otherwise $\waterlevel_{max} < 1/n$. Now consider the set $\Water
    = \Water^*_K \cup \{p_i \notin \Water^*_K ~|~ \alpha_i \leq
    \waterlevel_{max} \}$ of goods whose total resource is at most
    $\waterlevel_{max}$, it is clear that: 1) $\Water^*$, the optimal
    solution to the continuous version of the problem, forms a subset
    of $\Water$; 2) $\max\{\waterlevel_i ~|~ i \in \Water^*\} \leq
    \waterlevel_{max}$.

    Now suppose we have access to an additional of $|\Water|$ atomic
    units of resource, each of volume $1/K$, construct a new
    allocation by doing the following: Start with an allocation same
    as $\Water_K^*$, then assign one atomic unit of resource to each
    good in $\Water \supseteq \Water_K^*$. It is clear from the above
    discussion that for any good $p_i \in \Water$, its total
    resource under the new allocation is at least that of the total resource allocated under
    $\Water^*$, which means the utility that we obtain under the new
    allocation, $OPT''$, is at least $OPT$. Therefore $OPT - OPT_K \leq OPT'' - OPT_K
    \leq n \Utility(1/K)$; so to upper bound $OPT - OPT_K$ by $\epsilon$, it is
    sufficient to set $K = 1/\Utility^{-1}(\epsilon/n)$.
    \end{proof}

Note for most reasonable choices of $\Utility$ (e.g. $\Utility(x)
= x^p$ where $p\in (0,1)$), $K$ is polynomial in $n$. We have the
following theorem.

\begin{algorithm}[t]
\begin{algorithmic}[1]

\STATE \texttt{// INPUT: $G$, $\alpha \succeq 0$, $\epsilon
>0$, and schedule $\sigma$}

\STATE \texttt{// OUTPUT: An $\epsilon$-approximate Nash
Equilibrium}

\STATE Start by setting $K = \max_{j\in
[m]}{\Utility_j^{-1}(\epsilon/n))}$

\STATE \texttt{// Set an arbitrary initial state $s = (s_1, s_2,
..., s_m)$}

\FOR {$j = 1$ to $m$}

\STATE $a_j$ discretizes his one unit of resource into $2K$ atomic
units, each of volume $1/2K$; arbitrarily assigns them to her
adjacent goods, resulting in $s_j$

\ENDFOR

\STATE \texttt{// Sort in non-increasing order of total resource
allocated}

\STATE Arrange goods in the order $p_{\pi(1)}, p_{\pi(2)}, ...,
p_{\pi(n)}$ s.t. $\waterlevel_{\pi(i)} \geq \waterlevel_\pi(j)$ if
$1 \leq i < j \leq n$

\STATE \texttt{// Best response Nash Dynamics}

\FOR {$t = 1$ to $T$}

\STATE Let $a_{\sigma(t)}$ be the agent \emph{active} in round
$t$;

\WHILE {$\exists~ 1 \leq i < j \leq n$ s.t. $\waterlevel_{\pi(i)}
- \waterlevel_{\pi(j)} \geq 1/K$ and $x_{\sigma(t) \pi(i)}
> 0$}

\STATE $x_{\sigma(t) \pi(i)} = x_{\sigma(t) \pi(i)} - 1/2K$;
$x_{\sigma(t) \pi(j)} = x_{\sigma(t) \pi(j)} + 1/2K$

\STATE If necessary, re-define $\pi$ to maintain total resource
allocated in non-increasing order.

\ENDWHILE

\ENDFOR

\end{algorithmic}
\caption{$K$-discretized Best Response Nash Dynamics}
\label{NashDynamics}
\end{algorithm}

\begin{theorem} For any $\epsilon > 0$, Algorithm
\ref{NashDynamics} converges to an $\epsilon$-approximate Nash
equilibrium in $O(Kmn)$ time, where $K = \max_{j\in
[m]}{\Utility_j^{-1}(\epsilon/n))}$, for any updating schedule
$\sigma$.\footnote{$\sigma$ is assumed to at any time only pick an
agent whose state is not already a best-response.}
\end{theorem}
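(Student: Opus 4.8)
The plan is to split the theorem into its two assertions: that the terminal state is an $\epsilon$-approximate Nash equilibrium (correctness), and that it is reached in $O(Kmn)$ time (rate). For correctness, I would first note that every iteration of the inner \texttt{while} loop conserves the active agent's total allocation and transfers a single atomic unit of volume $1/2K$ from $p_{\pi(i)}$ to $p_{\pi(j)}$ only when $\waterlevel_{\pi(i)} - \waterlevel_{\pi(j)} \ge 1/K$ and $x_{\sigma(t)\pi(i)} > 0$. Because the transferred volume is exactly half the required gap, after the move the destination sits at $\waterlevel_{\pi(j)} + 1/2K \le \waterlevel_{\pi(i)} - 1/2K$, so a move never lets the sink overtake the source; the scan-and-swap therefore preserves the sorted order of water levels and makes monotone progress toward the exit condition. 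When the loop halts, the active agent satisfies precisely the two optimality conditions of Lemma~\ref{discretediff} (all resource allocated; any good with $x>0$ is within $1/K$ of the minimum), so $a_{\sigma(t)}$ holds a discretized best response.

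I would then invoke the footnote's assumption that $\sigma$ activates only agents who are not already best-responding: when the algorithm stops, \emph{every} agent is simultaneously at such a best response, so the state is a Nash equilibrium of the discretized game whose per-agent allocations each satisfy Lemma~\ref{discretediff}. Applying Lemma~\ref{approxlemma} to each agent, with $K = \max_j \Utility_j^{-1}(\epsilon/n)$ chosen so that every agent's discretized optimum is within additive $\epsilon$ of its continuous optimum, shows that no agent can gain more than $\epsilon$ by any deviation, even a continuous one. This is exactly the definition of an $\epsilon$-approximate Nash equilibrium.

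For the rate, I would introduce the global potential $\Phi = \sum_{p_i \in P} \waterlevel_i^2$ and show it is a Lyapunov function for the dynamics under \emph{every} schedule. Writing $\delta = 1/2K$, a single move with $\waterlevel_a - \waterlevel_b \ge 1/K = 2\delta$ changes the potential by $-2\delta(\waterlevel_a - \waterlevel_b) + 2\delta^2 \le -2\delta^2 = -1/(2K^2)$, so each move strictly decreases $\Phi$; since $\Phi \ge 0$, this already establishes termination for any $\sigma$. The substantive task is to sharpen this into the bound $O(Kmn)$. There are $2Km$ atomic units in play, so it suffices to show the total number of relocations is $O(Kmn)$, i.e. $O(n)$ per unit on average, and that each relocation together with the maintenance of the order $\pi$ costs $O(1)$ amortized, which I would secure by bucketing goods by water level in a linked structure so the inner swap is constant time.

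Establishing the $O(Kmn)$ move bound is the main obstacle, because the crude estimate ``$(\text{range of }\Phi)/(\text{min decrease per move})$'' is both $\alpha$-dependent and far too weak, yielding roughly $K^2$-type bounds. Instead I would argue by charging against the ``downhill'' property established above — each move sends a unit to a strictly lower water level — combined with the uniqueness of the equilibrium water profile guaranteed by Theorem~\ref{Thm:WeaklyUnique}, to show that the sorted water vector moves monotonically toward that unique profile. The delicate point is that another agent's activation can \emph{raise} the water level beneath a given unit, so the monotonicity cannot be stated per unit but must be argued at the level of the global sorted profile; making the charging precise, so that each of the $2Km$ units is credited with only $O(n)$ descents through the $n$ ranked water levels, is where the real work lies and is what ultimately delivers the $O(Kmn)$ running time.
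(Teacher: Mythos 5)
Your correctness half is essentially the paper's: the loop's exit condition is exactly the characterization of Lemma \ref{discretediff}, the schedule assumption gives that termination means every agent is at a $K$-discretized best response, and Lemma \ref{approxlemma} converts this into an $\epsilon$-approximate equilibrium of the continuous game. (One small slip: a transfer does not preserve the sorted order outright --- the source can fall below, and the sink rise above, intermediate goods, which is why the algorithm re-sorts on line 16; the invariant you actually need, and do state, is only that the sink never overtakes the source because the gap is $\geq 1/K$ and only $1/2K$ is moved.)

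The rate analysis, however, has a genuine gap, and you concede it yourself: the sum-of-squares potential $\sum_i \omega_i^2$ decreases by only $\Theta(1/K^2)$ per move, giving a $K^2$-type bound, and the proposed repair --- charging descents against monotone convergence of the sorted profile to the unique equilibrium --- is never carried out. It also rests on claims that are not established and are doubtful as stated: the sorted water vector need not approach the equilibrium profile monotonically (as you note, other agents' moves can raise levels under a unit), and Theorem \ref{Thm:WeaklyUnique} is about the continuous game, not about rest points of the $K$-discretized dynamics. The missing idea is to replace the quadratic potential with the \emph{linear, rank-weighted} one: sort the goods so that $\omega_{\pi(1)} \geq \dots \geq \omega_{\pi(n)}$ and set $\Phi = \sum_{i=1}^n (n-i)\,\omega_{\pi(i)}$. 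When an atomic unit of volume $1/2K$ moves from the good at rank $i$ to the good at rank $j>i$, re-sorting shifts the source right by some $\mu \geq 0$ and the sink left by some $\nu \geq 0$, but the source still precedes the sink ($i+\mu < j-\nu$) precisely because of the $1/K$-gap invariant; hence the potential lost at the source end is at least $(n-i-\mu)/2K$, the potential gained at the sink end is at most $(n-j+\nu)/2K$, and the net decrease is at least $1/2K$ per move. Since $0 \leq \Phi \leq n\sum_i \omega_i = O(nm)$ (up to the fixed ground levels), the total number of moves over all agents and rounds is $O(Kmn)$, with $O(1)$ work per move. This single observation delivers the bound for any schedule $\sigma$ and makes the entire charging scheme unnecessary.
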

\begin{proof} First note according to the characterization of
Lemma \ref{discretediff}, the response of each agent
$a_{\sigma(t)}$ in Algorithm \ref{NashDynamics} is a
$K$-discretized best response. The rest of this proof is to define
a potential function\footnote{This potential function is different
from the one given in the proof of Theorem
\ref{thm:potentialGame}; this new potential function is convenient
in upper bounding the convergence time.} whose range are positive
integers that span an interval no greater than $Kmn$, and to show
each time an agent updates his allocation with a best response,
the value of this potential function strictly decreases.

For simplicity of exposition, we write $p_i$ in place of
$p_{\pi(i)}$ in the rest of the proof. Let $p_1, p_2, ..., p_n$ be
the $n$ goods arranged in non-increasing order of total resource
allocated, that is, $\waterlevel_{1} \geq \waterlevel_{2} \geq ...
\geq \waterlevel_{n}$. Define potential function
$\Potential(\waterlevel_{1}, \waterlevel_{2}, ...,
\waterlevel_{n}) = \sum_{i=1}^n (n-i) \cdot \waterlevel_{i}$.
Apparently, $\Potential(\cdot)$ is a positive integer valued
function and the difference between the greatest and smallest
function value is upper bounded by $Kmn$. We are done if we can
show that for any node $a_{\sigma(t)}$, the computation that
$a_{\sigma(t)}$ does on line 13-17 of Algorithm \ref{NashDynamics}
results in a strict decrease in the potential.

On line 14-15 of Algorithm \ref{NashDynamics}, an atomic unit of
resource of volume $1/2K$ is moved from good $p_i$ to $p_j$. In
doing so, the goods may no longer be sorted in non-increasing
order of total resource, and in this case we restore it on line 16
of Algorithm \ref{NashDynamics}, which without loss of generality
can be thought of as moving $p_i$ to the right for some $\mu \geq
0$ positions (with $\mu$ being the minimum necessary), and moving
$p_j$ to the left in the ordering for some $\nu \geq 0$ positions
(again with $\nu$ being the minimum necessary). This results in
the new ordering of the goods: $$p_1, ..., p_{i-1}, p_{i+1}, ...,
p_{i+\mu}, p_i, ..., p_j, p_{j-\nu}, ..., p_{j-1}, p_{j+1}, ...,
p_n$$ Note $p_i$ still precedes $p_j$ (i.e. $i+\mu < j-\nu$) in
this ordering because prior to line 14-15 of Algorithm
\ref{NashDynamics}, $\waterlevel_{i} - \waterlevel_{j} \geq 1/K$,
therefore, the total resource of $p_i$ is still at least that of
$p_j$ after a $1/2K$ amount of resource has been moved from $p_i$
to $p_j$. With this observation, we can analyze the change in
potential by looking at the changes of potential on $\{p_i, ...,
p_{i+\mu}\}$ and $\{p_{j-\nu}, ..., p_j\}$ separately, and ignore
the rest of the goods, whose contribution to potential remain
unchanged. Clearly, the contribution to potential from $\{p_i,
..., p_{i+\mu}\}$ decreases, and by an amount of $
\Delta\Potential_{\downarrow} = (\waterlevel_i -
\waterlevel_{i+1}) \cdot (n-i) + (\waterlevel_{i+1}
-\waterlevel_{i+2} ) \cdot (n-i-1) ~~ + ... + (\waterlevel_{i+\mu}
- \waterlevel_i + 1/2K) \cdot (n-i-\mu) \geq (n-i-\mu)/2K $.
Similarly, the contribution to potential from $\{p_{j-\nu}, ...,
p_j\}$ increases by $ \Delta\Potential_{\uparrow} =
(\waterlevel_{j-1} - \waterlevel_j) \cdot (n-j) +
(\waterlevel_{j-2} -\waterlevel_{j-1} ) \cdot (n-j+1) ~~ + ... +
(\waterlevel_j + 1/2K - \waterlevel_{j-\nu}) \cdot (n-j+\nu) \leq
(n-j+\nu)/2K $. Since $i+\mu < j-\nu$, we have
$\Delta\Potential_{\downarrow} > \Delta\Potential_{\uparrow}$,
which means the potential decrease by at least 1. Therefore, in at
most $Kmn$ steps Algorithm \ref{NashDynamics} converges to a Nash
equilibrium in the $K$-discretized game. By Lemma
\ref{approxlemma}, this constitutes an $\epsilon$-approximate Nash
equilibrium to the original game.
\end{proof}

\section{Price of Anarchy of the Game}

We show in this section the \emph{price of anarchy} of NCGG is
unbounded, and it is for a reason that echoes the well-known
phenomenon called \emph{tragedy of the commons} \cite{Hardin68}.

\begin{theorem}\label{poa} The \emph{price of anarchy} of NCGG
is $\Omega(n^{1-\epsilon})$, for any $\epsilon>0$.
\end{theorem}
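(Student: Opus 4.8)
The plan is to exhibit a family of NCGG instances, parameterized by $n$, where an optimal (socially efficient) allocation achieves social welfare a factor $\Omega(n^{1-\epsilon})$ larger than the welfare attained at the (weakly unique) Nash equilibrium. Since the price of anarchy is a worst-case ratio, a single explicit family suffices. The guiding intuition is the tragedy of the commons: at equilibrium each selfish agent spreads her resource to equalize the water levels across her goods (this is exactly the water-filling characterization of Theorem~\ref{Thm:UniqueSolution}), whereas the socially optimal allocation would concentrate resources so that the utility benefits are shared by many agents who are entitled to the same good.

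First I would design the network to make this gap large. A natural candidate is a ``star-like'' or highly asymmetric bipartite graph: take one central good $p^*$ entitled to by all (or nearly all) of the $m$ agents, together with a large number of peripheral goods each entitled to by exactly one agent. At a Nash equilibrium every agent, facing the water-filling condition, pours only a small share of her unit of resource into $p^*$ because the marginal return there is diluted by the contributions of everyone else; she instead diverts resource to her private peripheral good whose water level she can keep high. The social optimum, by contrast, would have everyone pool their resource into $p^*$, so that the large water level $\waterlevel_{p^*}\approx m$ is enjoyed simultaneously by all $m$ agents, giving social welfare $\approx m\cdot\Utility(m)$. I would then choose a concrete $\Utility$ (e.g. $\Utility(x)=x^{p}$ with $p$ close to $0$, or $\Utility(x)=\sqrt{x}$, tuned against $\epsilon$) and set $m$ polynomially related to $n$, so that the ratio of optimal to equilibrium welfare comes out to $\Omega(n^{1-\epsilon})$.

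The concrete steps, in order, are: (i) specify the graph $G=(P,A,E)$ and the ground levels $\alpha_i$ as a function of $n$; (ii) compute (or lower-bound) the social welfare of the socially optimal allocation, which should be easy since pooling into the shared good is clearly near-optimal and I only need a lower bound; (iii) compute (or upper-bound) the social welfare at the Nash equilibrium, invoking the water-filling characterization from Theorem~\ref{Thm:UniqueSolution} and the weak uniqueness guaranteed by Theorem~\ref{Thm:WeaklyUnique} so that I may reason about \emph{the} equilibrium water levels rather than worry about which equilibrium is selected; and (iv) take the ratio and tune the parameters $m$ and the exponent in $\Utility$ so that it is $\Omega(n^{1-\epsilon})$ for arbitrary fixed $\epsilon>0$.

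The main obstacle I expect is the equilibrium-side analysis in step (iii): I must pin down precisely how much resource each agent contributes to the shared good $p^*$ at equilibrium, since the equilibrium welfare appears in the denominator and a loose bound there could collapse the claimed gap. The key leverage is that at equilibrium each agent equalizes marginal utilities (water levels) across the goods she actually funds, which forces her contribution to $p^*$ to be small precisely because $p^*$'s water level is propped up by everyone else's contributions; making this self-consistent equilibrium condition yield a clean, tight estimate of $\waterlevel_{p^*}$ at equilibrium is the delicate calculation. A secondary subtlety is choosing ground levels and the utility exponent jointly so that the peripheral goods are genuinely attractive enough to siphon resource away from $p^*$ at equilibrium, while keeping the optimum dominated by pooling, and then verifying that the final ratio survives for \emph{every} $\epsilon>0$ (which is what dictates pushing the exponent $p\to 0$ or $m$ toward a suitable polynomial in $n$).
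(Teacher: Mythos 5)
Your proposal is essentially the paper's own construction: one common good shared by all $n$ agents plus one private good per agent, with $\Utility(x)=x^{1-\epsilon}$, yielding equilibrium welfare $\Theta(n\,\Utility(1))$ against optimal welfare $n\,\Utility(n+1)$, hence a ratio of $\Omega(n^{1-\epsilon})$. The one device you did not anticipate is the paper's choice of ground levels ($\alpha_c=1$ on the common good, $0$ on the private goods), which makes the equilibrium analysis you flagged as delicate trivial: when every agent funds only her private good, all water levels equal $1$, so no agent can gain by deviating (by concavity), and weak uniqueness then pins down the equilibrium water levels.
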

\begin{proof} Consider the bipartite graph $G = (P, A, E)$ where $P = \{ p_c,
p_1, ..., p_n \}$, $A = \{ a_1, ..., a_n \}$ and $E =\{ (p_j,
a_j), $ $(p_c, a_j) ~|~ j \in [n]\}$ so that all agents share the
`common' good $p_c$ and each agent $a_j$ has a `private' good
$p_j$ to himself. Assume each agent $a_j$ has the same utility
function $\Utility$, $\alpha_i = 0$ $\forall ~a_i \in
\{p_1,...,p_n\}$ and $\alpha_c = 1$.

It is clear that it is a Nash equilibrium for every agent $a_j$ to
allocate her entire unit of resource to her private good $p_j$.
And in this case the social welfare is $2n \cdot \Utility(1) $. On
the other hand, if every agent devotes her entire unit of resource
to the common good, then the social welfare is $n \cdot
\Utility(n+1) $. Therefore the price of anarchy of this particular
example is at least $\frac{\Utility(n+1) }{ 2 \Utility(1) } =
O(\Utility(n+1))$. Since $\Utility(\cdot)$ is concave, we can set
$\Utility(x) = x^{1-\epsilon}$; therefore the theorem follows.
\end{proof}

\newpage

\begin{appendix}

\section{Proof of Theorem \ref{Thm:UniqueSolution}}

\begin{proof} Let $\lambda_i$ ($i = 1,2,...,n$) be the \emph{Lagrange
multiplier} associated with the inequality constraint $x_i \geq 0$
and $\nu$ the \emph{Lagrange multiplier} associated with the
equality constraint $\sum_{i=1}^n {x_i} = 1$. Since the above
program is convex, the following KKT optimality conditions,
\begin{subequations}
\begin{eqnarray}
\label{eq1}  \lambda_i^*  \geq 0, ~ x_i^* \geq 0 \quad & (i\in [n]) \\
\label{eq3}  \sum_{i=1}^n x_i^*  =  1 \quad & \\
\label{eq4} \lambda_i^* x_i^*  =  0 \quad  & (i\in [n]) \\
\label{eq5} -\frac{d}{d x_i} \Utility(\alpha_i + x_i^*) -
\lambda_i^* + \nu^* = 0 \quad & (i\in [n])
\end{eqnarray}
\end{subequations}
are sufficient and necessary for $x^*$ to be the optimal solution
to the (primal) convex program (\ref{ConvexProgramGeneral}) and
$(\lambda^*, \nu^*)$ the optimal solution to the associated dual
program.

Let $\Inverse$ be the inverse function of $\frac{d}{dt}\Utility$.
Note equation (\ref{eq4}) and (\ref{eq5}) implies $(-\frac{d}{d
x_i}\Utility(\alpha_i + x_i^*) + \nu^*) x_i^* = 0$; equation
(\ref{eq1}) and (\ref{eq5}) implies $-\frac{d}{d
x_i}\Utility(\alpha_i + x_i^*) + \nu^* \geq 0$, which combing with
the fact that $\Utility(\cdot)$ is convex implies $\Inverse
(\nu^*) \leq \alpha_i + x_i^*$. If $\alpha_i < \Inverse(\nu^*)$,
then $x_i^* > 0$ and thus $-\frac{d}{d x_i}\Utility(\alpha_i +
x_i^*) + \nu^* = 0$, i.e. $x_i^* = \Inverse(\nu^*) - \alpha_i$. On
the other hand, if $\alpha_i \geq \Inverse(\nu^*)$, then we must
have $x_i^* = 0$. To see why this is true, suppose otherwise
$x_i^* > 0$; this leads to $x_i^* = \Inverse(\nu^*) - \alpha_i
\leq 0$, which is a contradiction. We summarize the optimal
solution $x^*$ as follows
$$ x_i^* =
\begin{cases}
\Inverse(\nu^*) - \alpha_i \qquad & \alpha_i < \Inverse(\nu^*) \\
0 & \alpha_i \geq \Inverse(\nu^*)
\end{cases}$$ where $\nu^*$ is a solution to $
\sum_{i=1}^n \max{\{0, \Inverse(\nu^*) - \alpha_i\}} = 1$.

It is easy to see that $ \sum_{i=1}^n \max{\{0, \Inverse(\nu^*) -
\alpha_i\}} = 1$ admits a unique solution if we treat
$\Inverse(\nu^*)$ as the variable, i.e. different utility
functions only leads to different solutions of the Lagrange
multiplier $\nu^*$ but $\Inverse(\nu^*)$ remains invariant.
Therefore the optimal solution $x^*$ is unique not only of a
particular choice of $\Utility(\cdot)$, but across all utility
functions that are increasing, concave and differentiable.
\end{proof}

\end{appendix}

\end{document}